\newtheorem{defn}{Definition}
\newtheorem{prob}{Problem}
\newtheorem{thm}{Theorem}
\newtheorem{lemma}{Lemma}
\newtheorem{obs}{Observation}
\newcommand{\R}{\mathbb R}
\newcommand{\C}{\mathbb C}
\newcommand{\1}{\mathbf 1}
\newcommand{\CC}{\mathcal C}
\DeclareMathOperator{\argmax}{\mathrm{arg\,max}}
\DeclareMathOperator{\expval}{\mathrm E}
\DeclareMathOperator{\prb}{\mathrm{Pr}}
\DeclareMathOperator{\diag}{\mathrm{diag}}
\DeclareMathOperator{\rk}{\mathrm{rk}}
\DeclareMathOperator{\matrdim}{\mathrm{dim}}
\DeclareMathOperator{\var}{\mathrm{Var}}
\DeclareMathOperator{\PP}{\mathcal{PP}}
\title{Recovering Nonuniform Planted Partitions via Iterated Projection}
\author{Sam Cole}
\begin{document}

\maketitle

\begin{abstract}
In the \emph{planted partition problem}, the $n$ vertices of a random graph are partitioned into $k$ ``clusters,'' and edges between vertices in the same cluster and different clusters are included with constant probability $p$ and $q$, respectively (where $0 \le q < p \le 1$).  We give an efficient spectral algorithm that recovers the clusters with high probability, provided that the sizes of any two clusters are either very close or separated by $\geq \Omega(\sqrt n)$.  We also discuss a generalization of planted partition in which the algorithm's input is not a random graph, but a random real symmetric matrix with independent above-diagonal entries.

Our algorithm is an adaptation of a previous algorithm for the uniform case, i.e., when all clusters are size $n / k \geq \Omega(\sqrt n)$.  The original algorithm recovers the clusters one by one via \emph{iterated projection}: it constructs the orthogonal projection operator onto the \emph{dominant $k$-dimensional eigenspace} of the random graph's adjacency matrix, uses it to recover one of the clusters, then deletes it and recurses on the remaining vertices.  We show herein that a similar algorithm works in the nonuniform case.
\end{abstract}

\section{Introduction}

In the \emph{planted partition problem}, $n$ fixed vertices are partitioned into $k$ unknown ``clusters'' $C_1, \ldots, C_k$, and edges are added independently with probability $p$ between pairs of vertices in the same cluster and probability $q$ between vertices in different clusters, where $p$ and $q$ are constants such that $0 \leq q < p \leq 1$.  The goal is then to recover the unknown partition a.s.\ given a random graph drawn from this distribution.

This paper is a companion to~\cite{ColeFR17}, which gives a simple spectral algorithm for the special case of planted partition in which $|C_i| = n / k$ for all $i$ and $n / k = \Omega(\sqrt n)$.  
Our algorithm recovers the unknown partition via \emph{iterated projection}: it constructs the orthogonal projection operator onto the dominant $k$-dimensional eigenspace of the adjacency matrix of the randomly generated graph and uses it to recover a single cluster, then deletes it and recurses.

In this paper, we show that, with minor modifications, the same algorithm works in a much more general setting: namely, the setting in which the clusters are partitioned into ``superclusters,'' where clusters in the same supercluster are approximately the same size, while clusters in different superclusters have sizes separated by $ \geq \Omega(\sqrt n)$ (and, as in the uniform case, all clusters are size $\geq \Omega(\sqrt n)$).  

\subsection{Outline}

In Section~\ref{plantedpartition} we formally define the planted partition problem.  In Section~\ref{uniformcase} we briefly review the uniform case.  In Section~\ref{superclusters} we define the superclusters setting discussed above.  In Section~\ref{sec:algorithm} we describe our algorithm, and in Sections~\ref{BBhateigs}-\ref{delandrec} we prove its correctness.  In Section~\ref{paramfree} we show how to estimate the number of clusters in each supercluster empirically if the exact numbers are not known.  Finally, in Section~\ref{gen} we discuss a generalization of planted partition in which the algorithm's input is not a random graph, but a random real symmetric matrix $\hat A = (\hat a_{uv})_{u, v = 1}^n$ such that $\hat a_{uv}$ are independent random variables for $1 \leq i \leq j \leq n$ with expectation $p$ or $q$ (depending on whether $u$ and $v$ are in the same cluster or not).

\section{The planted partition problem}\label{plantedpartition}

We now formally define the planted partition problem.

\begin{defn}[Planted partition model]
Let $\mathcal C = \{C_1, \ldots, C_k\}$ be a partition of the set $[n] := \{1, \ldots, n\}$ into $k$ sets called \emph{clusters}, with $|C_i| =: s_i$ for $i = 1, \ldots, k$.  For constants 
$0 \leq q < p \leq 1$, we define the \emph{planted partition model} $\mathcal G(n, \mathcal C, p, q)$ to be the probability space of graphs with vertex set $[n]$, with edges $uv$ (for $u \neq v$) included independently with probability $p$ if $u$ and $v$ are in the same cluster in $\mathcal C$ and probability $q$ otherwise.  
\end{defn}

See Figure~\ref{plantedpartitionfigure}.  Note that the case $k = 1$ gives the standard Erd\H{o}s-R\'enyi model $\mathcal G(n, p)$~\cite{er1959}, and the case $k = n$ gives $\mathcal G(n, q)$.
\begin{figure}
\begin{center}
\begin{tikzpicture}[scale=.08]

\foreach \x/\y/\z/\w in {0/34.64/20/0, 20/0/-20/0, -20/0/0/34.64} {	

\draw (\x, \y) circle(16);	

\fill (\x - 7, \y - 2) circle(1);
\draw (\x - 7, \y - 2) -- (\x + 5, \y - 5);
\fill (\x + 5, \y - 5) circle(1);
\draw (\x - 1, \y - 3.5) node[anchor=south]{$p$};

\foreach \t in {.35, .65}
\fill (\t * \x + \z - \t * \z, \t * \y + \w - \t * \w) circle(1);
\draw[dotted] (.35 * \x + .65 * \z, .35 * \y + .65 * \w) -- (.65 * \x + .35 * \z, .65 * \y + .35 * \w);
\draw (.5 * \x + .5 * \z, .5 * \y + .5 * \w) node[anchor=north]{$q$};
}

\draw (0, 51.64) node[anchor=north]{$C_1$};
\draw (37, 0) node[anchor=east]{$C_2$};
\draw (-37, 0) node[anchor=west]{$C_3$};

\end{tikzpicture}
\end{center}
\caption{An illustration of the planted partition model.  Edges between two vertices in the same cluster are added with probability $p$, while edges between two vertices in different clusters are added with probability $q$.}\label{plantedpartitionfigure}

\end{figure}
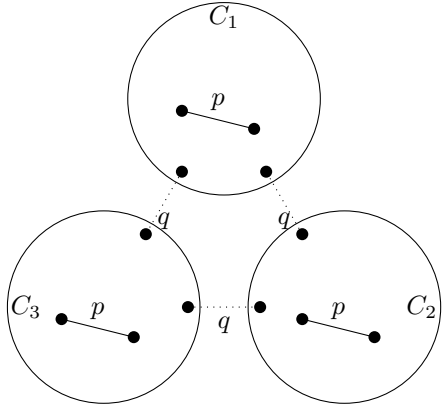

\begin{prob}[Planted partition]
Identify (or ``recover'') the unknown partition $C_1, \ldots, C_k$ (up to a permutation of $[k]$) given only a random graph $\hat G \sim \mathcal G(n, \mathcal C, p, q)$.
\end{prob}

Observe that, by considering the adjacency matrix of $\hat G$, we can think of this as a problem about random symmetric matrices whose above-diagonal entries are independent Bernoulli random variables.

\section{The uniform case}\label{uniformcase}

We now present a slightly modified version of the algorithm presented in~\cite{ColeFR17} for the uniform case:

\begin{algorithm}[H]
\caption{Uniform iterated projection}\label{alg}
Given a graph $\hat G = (\hat V, \hat E)$, cluster size $s$:
\begin{enumerate}
\item	Let $\hat A$ be the adjacency matrix of $\hat G$, $n := |\hat V|$, $k := n / s$. 

\item	Let $P_k(\hat A) =: (\hat p_{uv})_{u, v \in \hat V}$ be the orthogonal projection operator onto the subspace of $\R^n$ spanned by eigenvectors corresponding to the largest $k$ eigenvalues of $\hat A$.\label{step:proj}

\item	For each column $v$ of $P_k(\hat A)$, let $W_v := \{u \in \hat V : \hat p_{uv} \geq \frac1{2s}\}$, i.e., the indices of the ``large'' entries of column $v$ of $P_k(\hat A)$.\label{step:Wj}

\item	Let $v^*$ be the column $v$ such that $|W_v| \leq (1 + \epsilon)s$ and $||P_k(\hat A)\1_{W_v}||_2$ is maximum, i.e. $v^* := \argmax_{v : |W_v| \leq (1 + \epsilon)s}||P_k(\hat A)\1_{W_v}||_2$.  It will be shown that such a $v^*$ exists and $W_{v^*}$ has large intersection with a single cluster $C_i \in \mathcal C$ a.s.\label{step:jstar}

\item	Let $C$ be the set of vertices in $\hat G$ with $\geq (p - 10\epsilon)s$ neighbors in $W_{v^*}$.  It will be shown that $C = C_i$ a.s.\label{step:id}

\item	Remove $C$ and repeat on $\hat G[\hat V \setminus C]$.  Stop when there are $< s$ vertices left.
\end{enumerate}

\end{algorithm}

The overview of Algorithm~\ref{alg} is as follows.  The algorithm gets a random graph $\hat{G}$ generated according to $\mathcal G(n, \mathcal C, p ,q)$.  We first construct the projection operator which projects onto the subspace of $\R^n$ spanned by the eigenvectors corresponding to the largest $k$ eigenvalues of $\hat{G}$'s adjacency matrix.  This, we will argue, gives a fairly good approximation of at least one of the clusters, which we can then find and ``fix up."  Then we remove the cluster and repeat the algorithm.

The main result in~\cite{ColeFR17} is that this algorithm a.s.\ recovers planted partitions in which all clusters are the same size $s = \Omega(\sqrt n)$:

\begin{thm}\label{mainthm}
For sufficiently large $n$ with probability $\geq 1 - 2^{-\Omega(\sqrt n)}$, Algorithm~\ref{alg} correctly recovers planted partitions in which all clusters are size $s \geq c\sqrt n$, where $c = c(p, q) = \Theta((p - q)^{-2})$. 
\end{thm}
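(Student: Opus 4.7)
The plan is to compare $\hat A$ to its expectation $A := \expval \hat A$, which (after zeroing the diagonal) is a rank-$k$ block matrix $A = qJ + (p-q)\sum_{i=1}^{k}\1_{C_i}\1_{C_i}^T$ whose range is spanned by the pairwise-orthogonal indicators $\1_{C_1},\ldots,\1_{C_k}$. Consequently the orthogonal projector onto the top-$k$ eigenspace of $A$ has the clean form $P_k(A)_{uv} = 1/s$ if $u,v$ lie in the same cluster and $0$ otherwise, and the relevant spectral gap at position $k$ is $\lambda_k(A) - \lambda_{k+1}(A) = \Theta((p-q)s) = \Omega((p-q)c\sqrt n)$.

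The second step is matrix perturbation. By standard concentration for symmetric matrices with independent bounded entries (Vu, Feige--Ofek, and the like), $\|\hat A - A\|_{op} \leq C\sqrt n$ with probability $\geq 1 - 2^{-\Omega(\sqrt n)}$. Combined with the spectral gap, Davis--Kahan yields $\|P_k(\hat A) - P_k(A)\|_{op} \leq C'/((p-q)c)$, which can be made an arbitrarily small constant by taking $c$ sufficiently large. The algorithm's central test quantity, $\|P_k(\hat A)\1_{W_v}\|_2$, is then within $O(\sqrt{|W_v|}/((p-q)c))$ of $\|P_k(A)\1_{W_v}\|_2$.

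Next I would analyze the combinatorial steps. A small-constant perturbation of the projector, together with the $|W_v|\le(1+\epsilon)s$ budget, forces any candidate set $W_v$ that nearly maximizes $\|P_k(\hat A)\1_{W_v}\|_2$ to be concentrated on a single cluster $C_i$: expanding $\1_{W_v}$ in the orthonormal basis $\{s^{-1/2}\1_{C_j}\}$ of $\mathrm{range}(P_k(A))$, any mass placed outside a single $\1_{C_j}$ strictly reduces $\|P_k(A)\1_{W_v}\|_2$ subject to the size constraint. Hence $W_{v^*}$ and some $C_i$ differ on at most an $O(\epsilon)$-fraction of vertices. For the cleanup step, a true $C_i$-vertex has roughly $p|W_{v^*}\cap C_i| + q|W_{v^*}\setminus C_i| \approx ps$ neighbors in $W_{v^*}$, whereas an outside vertex has only $\approx qs$; Chernoff bounds with $s \geq c\sqrt n$ separate the two populations at threshold $(p-10\epsilon)s$ with per-vertex failure probability $2^{-\Omega(\sqrt n)}$, and a union bound over vertices preserves this rate.

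The recursion then closes the argument: after deleting $C_i$, the induced subgraph is distributed as $\mathcal G(n-s,\mathcal C\setminus\{C_i\},p,q)$, the hypothesis $s\geq c\sqrt{n-s}$ still holds, and a union bound over the $k \leq \sqrt n/c$ iterations preserves the $1-2^{-\Omega(\sqrt n)}$ overall success probability. The main technical obstacle is Step~\ref{step:Wj}: operator-norm Davis--Kahan only provides a \emph{constant}-size perturbation of the projector, whereas what is actually required is control on how many ``outside-cluster'' entries of a column of $P_k(\hat A)$ exceed the threshold $1/(2s)$. Establishing this -- whether via Frobenius-style bounds combined with Markov's inequality, or via a more refined entrywise eigenvector analysis exploiting the independence structure of $\hat A - A$ -- is where the bulk of the technical work lies and is the step most likely to force $c$ to depend on $(p-q)^{-2}$ as claimed.
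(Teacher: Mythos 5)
Your overall strategy matches the one in the paper and the companion work~\cite{ColeFR17} that it mirrors: compare $\hat A$ to its expectation, concentrate the operator norm at $O(\sqrt n)$, transfer this to a projector perturbation, argue that the algorithm's chosen $W_{v^*}$ is concentrated on a single cluster, clean up with Chernoff bounds, and recurse. You also correctly identify the first real hole in your sketch: an operator-norm Davis--Kahan bound is not enough for Step~\ref{step:Wj} of Algorithm~\ref{alg}, because what is needed is control over how many entries of a single column of $P_k(\hat A)$ cross the threshold $\frac{1}{2s}$. The paper fills this exactly the way you speculate but do not execute: the projector-perturbation lemma (see Lemma~\ref{projdiff} and its uniform-case analogue) also yields a Frobenius bound $\|P_k(A)-P_k(\hat A)\|_F \le \sqrt{2k}\,\epsilon$ by combining the operator-norm bound with the fact that the difference of two rank-$k$ projectors has rank at most $2k$. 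Then, as in Lemma~\ref{existsgoodB}, each entry that is misclassified by the threshold rounding contributes at least $\frac{1}{4\hat s^{\,2}}$ to $\|P_k(A)-P_k(\hat A)\|_F^2$, so the total number of ``errors'' is $O(k\epsilon^2 s^2)$; averaging over the $n$ columns produces a single column with only $O(\epsilon s)$ errors, hence a $W_v$ with $|W_v\,\triangle\, C_i| = O(\epsilon s)$. That counting argument is the technical core of the existence claim, and your proposal names it as the obstacle without supplying it.

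The second gap you do not flag, and it is the one that would actually break the proof as written: you assert that after deleting the recovered cluster ``the induced subgraph is distributed as $\mathcal G(n-s,\mathcal C\setminus\{C_i\},p,q)$.'' Conditionally, this is false. Which cluster is deleted --- and whether it is deleted exactly --- is a function of $\hat A$, so conditioning on that event biases the remaining edges, and you cannot treat the residual graph as a fresh sample and apply a naive union bound over the $k$ iterations. The paper handles this in Section~\ref{delandrec} by ``preprocessing the randomness'': it defines, on the \emph{original} probability space, $2^k$ spectral events $E_J$ (one per cluster submatrix) plus $nk$ degree events $D_{i,u}$, shows by a single union bound that all of them hold simultaneously with probability $1-2^{-\Omega(\sqrt n)}$, and then argues that on this one good event the algorithm succeeds \emph{deterministically} in every iteration, no matter which clusters were found earlier. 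This decoupling of the probabilistic estimate from the algorithm's adaptive choices is the missing idea in your recursion step, and without it the claimed overall success probability does not follow.
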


\section{A more general setting}\label{superclusters}

Without much work, one can show that, in fact, Algorithm~\ref{alg} works when all clusters are \emph{almost} the same size---i.e., when $(1 - \epsilon)\frac nk \leq |C_i| \leq (1 + \epsilon)\frac nk$ for all $i$, where $\epsilon = O(p - q)$.  A natural next step is to try to extend it to the case when the clusters are divided into $K$ ``superclusters,'' where clusters in the same supercluster have roughly the same size, while clusters in different superclusters have sizes separated by $\geq c\sqrt n$.  This is the setting which we consider for the remainder of this paper.

More precisely:
\begin{itemize}
\item	Let $\CC = \{C_1, \ldots, C_k\}$ be the set of clusters.  
\item	Let $s_i := |C_i|$ for $i = 1, \ldots, m$ and assume without loss of generality that
\begin{equation}\label{siassumption}
s_1 \geq \ldots \geq s_m \geq c\sqrt n.
\end{equation}  
\item	Assume $\CC$ is partitioned into $K$ ``superclusters'' $\CC = \CC_1 \cup \ldots \cup \CC_K$.
\item	Let $k_i := |\CC_i|$ be the number of clusters in supercluster $\CC_i$, for $i = 1, \ldots, K$.  \item	Assume that the sizes of clusters in different superclusters are separated by $\geq c\sqrt n$.  Furthermore, we may assume that the $\CC_i$ are arranged in decreasing order of their cluster sizes; i.e.,
\begin{equation}\label{superclustersep}
\min_{C \in \CC_i} |C| \geq \max_{C \in \CC_{i + 1}}|C| + c\sqrt n
\end{equation}
for $i = 1, \ldots, K - 1$.  Thus, by~\eqref{siassumption} we have
\begin{equation}\label{superclustersepindices}
\CC_1 = \{C_1, \ldots, C_{k_1}\}, \CC_2 = \{C_{k_1 + 1}, \ldots, C_{k_1 + k_2}\}, \ldots, \CC_K = \{C_{k - k_K + 1}, \ldots, C_k\}.
\end{equation}
\item	Within the superclusters the sizes are approximately the same:
\begin{equation}\label{withinsupercluster}
\max_{C \in \CC_i}|C| \leq (1 + \epsilon)\min_{C \in \CC_i}|C|
\end{equation}
for $i = 1, \ldots, K$, where $\epsilon = \epsilon(p, q)$ will be specified later.
\item	We may sometimes abuse notation and use $\mathcal C_i$ to refer to the set of \emph{indices} $j$ such that $C_j \in \mathcal C_i$ or the set of \emph{vertices} $u \in \bigcup_{C \in \mathcal C_i}C$.
\end{itemize}  

Our goal is still to recover the \emph{individual clusters} exactly; we do not care about identifying which pairs of clusters belong to the same supercluster.  Note that we assume that we know $p$, $q$, and $k_1, \ldots, k_K$ a priori.  We will discuss how to determine these parameters empirically in Section~\ref{paramfree}.

\subsection{Notation and definitions}\label{notation}

We will use the following graph and matrix notation throughout this paper:

\begin{itemize}
\item	$N_G(v)$ -- neighborhood of vertex $v$ in a graph $G$.  We will omit the subscript $G$ when the meaning is clear.
\item	$G[S]$ -- the induced subgraph of $G$ on $S \subseteq V(G)$.
\item	$A[S]$ -- the principal submatrix of $A$ with row and column indices restricted to $S$.
\item	$\lambda_i(A)$ -- the $i$th largest eigenvalue of a symmetric matrix $A$ (recall that symmetric matrices have real eigenvalues).
\item	$\lambda_i(G)$ -- the $i$th largest eigenvalue of $G$'s adjacency matrix.
\item	$||\cdot||_2$ -- the $\ell_2$- (``spectral'') norm of a vector or matrix.
\item	$||\cdot||_F$ -- the Frobenius norm of a matrix.
\item	$I_n$ -- the $n \times n$ identity matrix.
\item	$J_n$ -- the $n \times n$ 1s matrix.
\item	$\1_S$ -- the indicator vector $\in \{0, 1\}^n$ for the set $S \subseteq [n]$.
\item	$\1_n$ -- the all 1s vector $\in \R^n$, i.e. $\1_{[n]}$.
\item	$\expval[X]$ -- the expectation of a random variable $X$.  If $X$ is matrix or vector valued, then the expectation is taken entrywise.
\item	a.s.\ -- almost surely, i.e.\ with probability $1 - o(1)$ as $n \to \infty$.
\end{itemize}

In addition, the following definitions are central to our algorithm:

\begin{defn}[Dominant eigenspace]
The \emph{dominant $r$-dimensional eigenspace} of an $n \times n$ symmetric matrix $A$ is the subspace of $\R^n$ spanned by eigenvectors corresponding to the largest $r$ eigenvalues of $A$.  Note that this is well-defined as long as $\lambda_r(A) \neq \lambda_{r + 1}(A)$.
\end{defn}

\begin{defn}[Rank-$r$ projector]\label{rankrproj}
The \emph{rank-$r$ projector} of an $n \times n$ symmetric matrix $A$, denoted $P_r(A)$, is the orthogonal projection operator onto the dominant $r$-dimensional eigenspace of $A$, represented in the standard basis for $\R^n$.
\end{defn}

We will denote as follows the main quantities to consider in this paper. 
\begin{itemize}
\item	$\hat{G} = ([n], \hat E)$ -- a random graph obtained from an \emph{unknown} planted partition distribution $\mathcal G(n, \mathcal C, p, q)$.  This is what the cluster identification
algorithm receives as input. 
\item	$\hat A = (\hat a_{uv})_{u, v = 1}^n \in \{0, 1\}^{n \times n}$ -- the adjacency matrix of $\hat G$.
\item	$\expval[\hat A] := (\expval[\hat a_{uv}])_{u, v = 1}^n$ -- the entrywise expectation of $\hat A$.  \item	$A = (a_{uv})_{u, v= 1}^n := \expval[\hat A] + pI_n$ -- the expectation of the adjacency matrix $\hat{G}$ with $p$s added to the diagonal (to
make it a rank $k$).
\item	$\hat B = (\hat b_{uv})_{u, v= 1}^n:= \hat A + pI_n - qJ_n$.
\item	$B = (b_{uv})_{u, v= 1}^n:= \expval[\hat B] = A - qJ_n$.
 \end{itemize}
 
 A key difference between the uniform and nonuniform cases is that the algorithm and analysis are based on $B$ and $\hat B$ rather than $A$ and $\hat A$.  This simplifies the spectral analysis considerably, since $B$ is essentially a block diagonal matrix (after permuting the rows and columns).  In order to compute $\hat B$, we assume that our algorithm has access to the exact values of $p$ and $q$, or at least good approximations.  We discuss this further in Section~\ref{unknownpq}.
 
\section{The algorithm}\label{sec:algorithm}

We now show how to adapt Algorithm~\ref{alg} to the ``superclusters'' setting presented in Section~\ref{superclusters}.  The key difference is that we will project onto the eigenspace of $\hat B$ corresponding to its largest $k_1$ (rather than $k$) eigenvalues.  Because we have an $\Omega(\sqrt n)$ separation between $\CC_1$ and $\CC_2$, this will allow us to recover one of the clusters in $\mathcal C_1$.  When we have recovered all clusters in $\CC_1$, we will move on to $\CC_2$, then $\CC_3$, and so on.

Another complication is that, since the clusters are not all the same size, it is no longer reasonable to assume we know the cluster sizes exactly.  However, we will see that the eigenvalues of $\hat B$ give good approximations to the cluster sizes.  More precisely, for $C_i \in \CC_j$, $\lambda_i(\hat B)$ is a good approximation to $s_i$.  In fact, since all clusters in $\CC_j$ are approximately the same size, it is a good approximation to the size of \emph{any} cluster in $\CC_j$.  This allows us to construct an approximate cluster of roughly the correct size, as in Step~\ref{step:jstar} of Algorithm~\ref{alg}.

\begin{algorithm}[H]
\caption{Nonuniform iterated projection}\label{nonunifalg}
Given a graph $\hat G = (\hat V, \hat E)$, supercluster sizes $k_1, \ldots, k_K$:
\begin{enumerate}
\item	Let $\hat A$ be the adjacency matrix of $\hat G$, $n := |\hat V|$, $\hat B := \hat A - qJ_n + pI_n$. 

\item	Let $P_{k_1}(\hat B) =: (\hat p_{uv})_{u, v \in \hat V}$ be the orthogonal projection operator onto the dominant $k_1$-dimensional eigenspace of $\hat B$.\label{step:proj:nonunif}

\item	Let $\hat s := (\lambda_1(\hat B) + 7\sqrt n) / (p - q)$.  We will see that this is approximately the size of the largest cluster.\label{step:shat}

\item	For each column $v$ of $P_{k_1}(\hat B)$, let $W_v := \{u \in \hat V : \hat p_{uv} \geq \frac1{2\hat s}\}$, i.e., the indices of the ``large'' entries of column $v$ of $P_{k_1}(\hat B)$.\label{step:Wj:nonunif}

\item	Let $v^*$ be the column $v$ such that $|W_v| \leq (1 + \epsilon)s$ and $||P_{k_1}(\hat B)\1_{W_v}||_2$ is maximum, i.e. $v^* := \argmax_{v : |W_v| \leq (1 + \epsilon)\hat s}||P_{k_1}(\hat B)\1_{W_v}||_2$.  It will be shown that such a $v^*$ exists and $W_{v^*}$ has large intersection with a single cluster $C_i \in \mathcal C_1$ a.s.\label{step:jstar:nonunif}

\item	Let $C$ be the set of vertices in $\hat G$ with $\geq (p - 10\epsilon)\hat s$ neighbors in $W_{v^*}$.  It will be shown that $C = C_i$ a.s.\label{step:id:nonunif}

\item	Remove $C$ and repeat on $\hat G[\hat V \setminus C]$, with supercluster sizes $k_1 - 1, k_2, \ldots, k_K$.  If $k_1 = 1$, instead use $k_2, \ldots, k_K$ as the supercluster sizes (i.e., $k_2$ becomes the ``new'' $k_1$, $k_3$ the ``new'' $k_2$, and so on).  Stop when all supercluster sizes are 0.
\end{enumerate}
\end{algorithm}

The main result of this paper is the following:

\begin{thm}
Let $\mathcal C$ be an unknown partition of $[n]$ satisfying the conditions in Section~\ref{superclusters}, with $\epsilon = O(p - q)$  and $c = \Omega\left(\frac1{(p - q)\epsilon}\right)$.  Then Algorithm~\ref{nonunifalg} recovers $\mathcal C$ given only $\hat G \sim \mathcal G(n, \mathcal C, p, q)$ with probability $\geq 1 - 2^{-\Omega(\sqrt n)}$.
\end{thm}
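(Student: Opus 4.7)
The plan is to reduce the analysis to the idealized matrix $B = \mathbb{E}[\hat B]$ (which, after permuting rows and columns, is block diagonal with $k$ blocks, the $i$th block being $(p-q) J_{s_i}$) and then transfer everything to $\hat B$ via matrix perturbation. First I would compute the spectrum of $B$: its nonzero eigenvalues are exactly $(p-q) s_1 \geq \ldots \geq (p-q) s_k$ with corresponding eigenvectors $\frac{1}{\sqrt{s_i}} \mathbf{1}_{C_i}$, so by the supercluster separation condition~\eqref{superclustersep} the spectral gap between $\lambda_{k_1}(B)$ and $\lambda_{k_1 + 1}(B)$ is at least $(p-q) c \sqrt n$. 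Next I would invoke a standard Wigner-type concentration bound (e.g.\ F\"uredi--Koml\'os) to show $\|\hat B - B\|_2 = \|\hat A - \mathbb{E}[\hat A]\|_2 \leq O(\sqrt n)$ with probability $1 - 2^{-\Omega(\sqrt n)}$. Combined with the eigengap, Weyl's inequality gives $|\lambda_i(\hat B) - (p-q) s_i| \leq O(\sqrt n)$, which by the within-supercluster condition~\eqref{withinsupercluster} immediately yields that $\hat s$ (as defined in Step~\ref{step:shat}) satisfies $s_i \leq \hat s \leq (1 + O(\epsilon)) s_i$ for every $i \in \mathcal C_1$, provided $c$ is chosen large enough relative to $1/((p-q)\epsilon)$. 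The Davis--Kahan theorem then gives $\|P_{k_1}(\hat B) - P_{k_1}(B)\|_F \leq O(1/((p-q) \sqrt n)) \cdot \|\hat B - B\|_2 \cdot \sqrt{k_1}$, which is a small quantity suitable for the columnwise arguments below.

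For recovering a single cluster, I would observe that $P_{k_1}(B)$ is the block-diagonal projector whose column for any $v \in C_i$ ($i \in \mathcal C_1$) is exactly $\frac{1}{s_i} \mathbf{1}_{C_i}$, while columns indexed by vertices in later superclusters are zero. Combined with the Davis--Kahan bound, an averaging/Markov argument over columns shows that for most columns $v \in C_i \in \mathcal C_1$, the column of $P_{k_1}(\hat B)$ is entrywise close to $\frac{1}{s_i} \mathbf{1}_{C_i}$; in particular, the thresholding in Step~\ref{step:Wj:nonunif} with threshold $\frac{1}{2\hat s} \approx \frac{1}{2 s_i}$ produces a set $W_v$ that contains almost all of $C_i$ and few vertices outside. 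The selection rule in Step~\ref{step:jstar:nonunif} ($|W_v| \leq (1+\epsilon)\hat s$ of maximum $\|P_{k_1}(\hat B)\mathbf{1}_{W_v}\|_2$) then picks out one such ``good'' column $v^*$: the size cap excludes the very bad columns, while the maximization excludes columns that are too small because $\mathbf{1}_{C_i}$ is (approximately) an eigenvector of $P_{k_1}(B)$ with eigenvalue $1$, making $\|P_{k_1}(\hat B)\mathbf{1}_{W_{v^*}}\|_2 \approx \sqrt{s_i}$. Finally, for the cleanup in Step~\ref{step:id:nonunif}, Chernoff bounds on the degree of each vertex into $W_{v^*}$ show that vertices in $C_i$ have $\geq (p - O(\epsilon))\hat s$ neighbors in $W_{v^*}$ while vertices outside $C_i$ have at most $(q + O(\epsilon))\hat s + o(\hat s) < (p - 10\epsilon)\hat s$, so the threshold exactly separates $C_i$ from $\hat V \setminus C_i$.

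For the recursion, I would note that the induced subgraph $\hat G[\hat V \setminus C_i]$ is distributed as a planted partition with cluster set $\mathcal C \setminus \{C_i\}$, and the supercluster structure is preserved (with $k_1$ decremented, eventually rolling over to start on $\mathcal C_2$). Hence the same analysis applies at each step, possibly with slightly weaker constants in $n$; since the total number of recursive calls is $k \leq n / (c\sqrt n) = O(\sqrt n / c)$, a union bound absorbs the per-call failure probability $2^{-\Omega(\sqrt n)}$ into a total failure probability of $2^{-\Omega(\sqrt n)}$. Care is needed at the transition between superclusters: when $k_1$ drops to $0$, the ``new'' $k_1$ becomes the old $k_2$, and the spectral gap we now exploit is the separation between $\mathcal C_2$ and $\mathcal C_3$, which is again $\geq (p-q) c\sqrt n$ by~\eqref{superclustersep}, so no new ideas are needed.

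The main obstacle I expect is the columnwise control of $P_{k_1}(\hat B)$: Davis--Kahan gives a Frobenius/operator-norm bound on the difference $P_{k_1}(\hat B) - P_{k_1}(B)$, but the algorithm uses an entrywise threshold. To justify Steps~\ref{step:Wj:nonunif}--\ref{step:jstar:nonunif} I need to argue that \emph{most} columns of $P_{k_1}(\hat B)$ are entrywise close to the corresponding columns of $P_{k_1}(B)$, and that the selection criterion in Step~\ref{step:jstar:nonunif} always lands on one of those good columns rather than a pathological one. The extra difficulty relative to~\cite{ColeFR17} is that now $\hat s$ is an estimate rather than an exact input, so the threshold $\frac{1}{2\hat s}$ and size cap $(1+\epsilon)\hat s$ both carry error, which is precisely why the within-supercluster tightness~\eqref{withinsupercluster} and the choice $\epsilon = O(p-q)$ are needed to keep these slacks below the gap between signal $\frac{1}{s_i}$ and noise.
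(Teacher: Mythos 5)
Your approach is essentially the paper's: compute the spectrum of $B$, bound $\|\hat B - B\|_2$ via a F\"uredi--Koml\'os argument, translate that into an eigenvalue/eigenspace perturbation bound (you cite Davis--Kahan; the paper proves an equivalent lemma via the Cauchy integral formula, but these yield the same kind of estimate given the eigengap), use a column-averaging argument to find a good $W_v$ and a norm-maximization criterion to select one, then use degree-counting to clean up, then recurse. You also correctly identify the new wrinkle that $\hat s$ is estimated from $\lambda_1(\hat B)$, and why the within-supercluster tightness~\eqref{withinsupercluster} is needed to keep that estimate usable.

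The genuine gap is in the ``delete and recurse'' step. You write that $\hat G[\hat V \setminus C_i]$ is distributed as a planted partition on $\CC \setminus \{C_i\}$, and conclude by taking a union bound over the $O(\sqrt n / c)$ recursive calls of a ``per-call failure probability'' of $2^{-\Omega(\sqrt n)}$. This doesn't quite work as stated: the cluster removed in iteration $1$ is chosen by looking at the \emph{whole} graph $\hat G$, including the edges inside $\hat V \setminus C_i$, so conditioned on iteration $1$ succeeding, $\hat G[\hat V \setminus C_i]$ is \emph{not} a fresh sample from the planted-partition distribution, and the ``per-call failure probability'' in iteration $2$ is a conditional quantity that you haven't controlled. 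The paper sidesteps this by ``preprocessing the randomness'': it defines, once and for all, spectral events $E_J$ for every one of the $2^k$ cluster-induced submatrices $\hat B_J$ (that $\|B_J - \hat B_J\|_2$ is small) and degree events $D_{i,u}$ for every vertex--cluster pair (that $|N_{\hat G}(u)\cap C_i|$ is close to its mean), shows via a union bound over these $2^k + nk$ events that they \emph{all} hold simultaneously with probability $1 - 2^{-\Omega(\sqrt n)}$, and then argues that whenever all the events hold the algorithm succeeds \emph{deterministically} at every iteration. You need something of this flavor; a naive per-iteration union bound does not account for the dependence. (As a much smaller point, your Davis--Kahan bound $\|P_{k_1}(\hat B) - P_{k_1}(B)\|_F \le O\bigl(1/((p-q)\sqrt n)\bigr)\,\|\hat B - B\|_2\,\sqrt{k_1}$ appears to drop the factor of $c$ from the eigengap $(p-q)c\sqrt n$ in the denominator; as written the right-hand side is $O(\sqrt{k_1}/(p-q))$, which is not small, so you want $1/((p-q)\,c\,\sqrt n)$ there, which then gives $O(\sqrt{k_1}\,\epsilon)$ as in the paper.)
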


Sections~\ref{BBhateigs}-\ref{delandrec} are devoted to proving the correctness of Algorithm~\ref{nonunifalg}, mirroring the analysis in~\cite{ColeFR17}.  Sections~\ref{BBhateigs} and~\ref{projdev} develop the linear algebra tools necessary for the proof, Section~\ref{recsingle} uses these tools to prove that Steps~\ref{step:Wj:nonunif}-\ref{step:id:nonunif} of Algorithm~\ref{nonunifalg} successfully recover a single cluster a.s., while Section~\ref{delandrec} shows that the algorithm as a whole successfully recovers \emph{all} clusters a.s.  

\section{Eigenvalues of $B$ and $\hat B$}\label{BBhateigs}

Observe that by permuting the rows and columns of $B$ we get $B \sim (p - q)\diag(J_{s_1}, \ldots, J_{s_k})$.  Thus, its eigenvalues are trivial to compute:

\begin{lemma}\label{Beigs}
B is a rank-$k$ matrix with eigenvalues 
\begin{eqnarray*}
&&\lambda_i(B) = (p-q)s_i\ \textrm{for}\ i = 1,\ldots,k,	\\
&&\lambda_i(B) = 0\ \textrm{for}\ i = k + 1, \ldots, n.\end{eqnarray*}
\end{lemma}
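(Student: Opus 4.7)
The plan is to make precise the observation preceding the lemma: after a symmetric permutation of rows and columns, $B$ is a block-diagonal matrix whose blocks are scalar multiples of all-ones matrices, and the spectrum of such a matrix is immediate.

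First I would unpack the entries of $B$. By the definitions, $\expval[\hat a_{uv}]$ equals $p$ when $u \neq v$ lie in the same cluster and $q$ when they lie in different clusters, while the diagonal of $\expval[\hat A]$ is $0$. Adding $pI_n$ produces $A$, whose entries are $p$ on every same-cluster pair (diagonal included) and $q$ on every different-cluster pair. Subtracting $qJ_n$ then yields
\[
b_{uv} = \begin{cases} p - q & \text{if $u$ and $v$ lie in the same cluster,} \\ 0 & \text{otherwise.} \end{cases}
\]
Permute the rows and columns of $B$ simultaneously so that the vertices of $C_1$ come first, then those of $C_2$, and so on. This symmetric permutation is a similarity transformation, hence preserves the spectrum, and turns $B$ into the block-diagonal matrix $(p - q)\diag(J_{s_1}, \ldots, J_{s_k})$, as claimed in the paragraph before the lemma.

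Next I would read off the eigenvalues block by block. For each $i$, the matrix $J_{s_i}$ has rank $1$: the vector $\1_{s_i}$ is an eigenvector with eigenvalue $s_i$, and any vector in its orthogonal complement in $\R^{s_i}$ is in the kernel, contributing $s_i - 1$ zero eigenvalues. Hence $(p - q)J_{s_i}$ contributes one eigenvalue $(p - q)s_i$ and $s_i - 1$ zero eigenvalues. Summing over the $k$ blocks, the multiset of eigenvalues of $B$ is $\{(p - q)s_1, \ldots, (p - q)s_k\} \cup \{0\}^{n - k}$, and the corresponding eigenvectors $\1_{C_i}$ (lifted back through the permutation) are nonzero and mutually orthogonal, so $\rk(B) = k$.

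Finally I would check the ordering. Since $p - q > 0$ and the $s_i$ are positive, the nonzero eigenvalues are positive, so they are the top $k$ in the standard ordering; the convention~\eqref{siassumption} that $s_1 \ge \cdots \ge s_k$ then gives $\lambda_i(B) = (p - q)s_i$ for $i = 1, \ldots, k$ and $\lambda_i(B) = 0$ for $i > k$. There is no real obstacle here---the lemma is essentially a direct calculation once the block structure of $B$ is identified; the only thing to be slightly careful about is confirming that the diagonal correction $pI_n$ in the definition of $A$ is exactly what makes $B$ have value $p - q$ (not $-q$) on the diagonal, so that the blocks come out as $J_{s_i}$ rather than $J_{s_i} - I_{s_i}$.
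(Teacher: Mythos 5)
Your proof is correct and follows exactly the route the paper takes (the paper simply states the permutation-to-block-diagonal observation and calls the eigenvalue computation trivial, so you are filling in precisely those details). The entrywise check that $b_{uv} = p-q$ on same-cluster pairs and $0$ otherwise, the spectrum of $J_{s_i}$, and the sign/ordering remark are all accurate.
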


As in the uniform case, we a.s.\ get a deviation of at most $O(\sqrt n)$ between the eigenvalues of $B$ and $\hat B$ by applying a modified version of F\"uredi and Koml\'os's well-known result on the distribution of eigenvalues of random symmetric matrices~\cite{FK81}:

\begin{lemma}\label{lemma:B-Bhatnorm}
With probability $\geq 1 - e^{-n}$,
\begin{equation}\label{B-Bhatnorm}
||B - \hat B||_2 \leq 7\sqrt n
\end{equation}
for sufficiently large $n$.
\end{lemma}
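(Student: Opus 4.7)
The plan is to reduce the bound to a concentration statement about the spectral norm of a symmetric random matrix with independent bounded entries. First I would observe that $\hat B - B = \hat A - \expval[\hat A] =: M$, where $M$ is a symmetric $n \times n$ matrix with zero diagonal whose above-diagonal entries $m_{uv} = \hat a_{uv} - \expval[\hat a_{uv}]$ are independent, centered, bounded in $[-1,1]$, and have variance at most $\sigma^2 := \max(p(1-p), q(1-q)) \leq 1/4$. So it suffices to show $\|M\|_2 \leq 7\sqrt n$ with probability $\geq 1 - e^{-n}$.

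The next step is to bound $\expval\|M\|_2$. The Füredi--Komlós theorem \cite{FK81}---or, more conveniently, any of its modern refinements, which is what the ``modified version'' alluded to in the lemma statement presumably refers to---gives $\expval\|M\|_2 \leq 2\sigma\sqrt n + o(\sqrt n) \leq (1 + o(1))\sqrt n$ for such a matrix. With the expectation controlled, what remains is to bound the fluctuations of $\|M\|_2$ about its mean with a Gaussian tail of width $O(\sqrt n)$.

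Finally I would invoke Talagrand's concentration inequality for convex Lipschitz functions of bounded independent random variables. Viewing $\|M\|_2$ as a function of the $\binom{n}{2}$ independent entries $\{m_{uv}\}_{u < v}$, it is convex (operator norm is convex in the matrix entries), and perturbing a single entry by $\delta$ changes $M$ in Frobenius norm by $\sqrt 2\,|\delta|$ (the symmetric pair moves together), so $\|M\|_2$ is $\sqrt 2$-Lipschitz in these coordinates. Talagrand's inequality then yields $\prb(\|M\|_2 \geq \expval\|M\|_2 + t) \leq 2\exp(-c t^2)$ for an absolute constant $c > 0$; choosing $t$ a large enough constant multiple of $\sqrt n$ gives $\|M\|_2 \leq (1+o(1))\sqrt n + t \leq 7\sqrt n$ with probability at least $1 - e^{-n}$ for $n$ sufficiently large. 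The main obstacle is purely bookkeeping: one must calibrate the constants so that the mean bound plus the deviation $t$ fits inside $7\sqrt n$ while the Talagrand tail is at most $e^{-n}$. The numerical slack in the constant $7$ is generous enough that any of the standard formulations of the two inequalities suffices.
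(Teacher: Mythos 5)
Your argument is correct and reaches the same conclusion, but it takes a more explicit route than the paper does. The paper's proof is a one-liner: it applies a pre-packaged F\"uredi--Koml\'os-type theorem (Theorem~7 of~\cite{ColeFR17}) to $X := \hat B - B$ with parameters $K = 1$, $\sigma = 1/2$, and that theorem directly delivers $\|X\|_2 \leq 7\sqrt n$ with probability $\geq 1 - e^{-n}$ in a single stroke. You instead decompose the task into (i) a first-moment bound $\expval\|M\|_2 \leq 2\sigma\sqrt n + o(\sqrt n)$ via F\"uredi--Koml\'os or one of its refinements, and (ii) a Gaussian tail around the mean via Talagrand's inequality for convex Lipschitz functions of bounded independent coordinates. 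This is a valid and more self-contained modern approach; the Lipschitz computation ($\sqrt 2$ with respect to the above-diagonal entries) and convexity of the operator norm are both correct, and the numbers fit comfortably inside the generous constant $7$: the mean contributes roughly $\sqrt n$ and the deviation needed for a tail of $e^{-n}$ is a few multiples of $\sqrt n$, well under $6\sqrt n$. One small technical point worth tightening in the bookkeeping: each centered Bernoulli entry actually lies in an interval of length $1$ (e.g.\ $[-p, 1-p]$), not length $2$, so the effective Lipschitz constant after rescaling to $[0,1]^N$ is $\sqrt 2$ rather than $2\sqrt 2$ --- this gives you extra slack, which is reassuring since without it the constants would be close to tight. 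The trade-off between the two proofs is clear: the paper's citation is shorter but depends on the reader having access to the specific Theorem~7, while your version is longer but builds the estimate from two widely known ingredients.
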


\begin{proof}
Apply~\cite[Theorem~7]{ColeFR17} to $X := \hat B - B$ with $K = 1, \sigma = 1 / 2$ to get
\[||B - \hat B||_2 = \max_{i = 1}^n|\lambda_i(X)| \leq 7\sqrt n\]
with probability $\geq 1 - e^{-n}$.
\end{proof}

By Weyl's inequalities (see, e.g.,~\cite[Theorem4.4.6]{friedland2015matrices}), we get
\begin{equation}\label{weyl}
|\lambda_i(B) - \lambda_i(\hat B)| \leq 7\sqrt n
\end{equation}
for $i = 1, \ldots, n$; i.e., we can approximate the eigenvalues of $B$ with those of $\hat B$ (and vice versa) with at most $O(\sqrt n)$ error.  This yields an $\Omega(\sqrt n)$ separation in the eigenvalues of both $B$ and $\hat B$ between different superclusters; i.e., for $i = 1, \ldots, K - 1$,
\[\min_{C_j \in \CC_i}\min\{\lambda_j(B), \lambda_j(\hat B)\} \geq \max_{C_j \in \CC_{i + 1}}\max\{\lambda_j(B), \lambda_j(\hat B)\} + \Omega(\sqrt n),\]
as shown in Figure~\ref{eigdist}.  However, such a separation between $\CC_1$ and $\CC_2$ will suffice, since Algorithm~\ref{nonunifalg} computes $P_{k_1}(\cdot)$ in \emph{each} iteration on a submatrix of the original $\hat B$ from the first iteration.

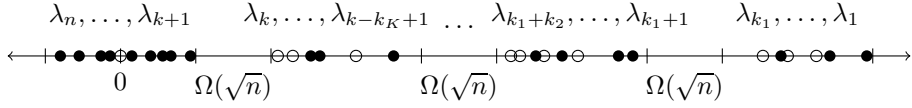
\begin{figure}
\centering

\begin{tikzpicture}

\draw[<->] (-1.5, 0) -- (10.5, 0);

\foreach \x in {-1, 0, 1, 2, 4, 5, 7, 8, 10}
\draw (\x, .1) -- (\x, -.1);

\foreach \x in {-.26, -.55, .4, -.8, .93, -.14, .15, .56, .67,
3.63, 2.65, 2.53, 
5.52, 5.87, 6.62, 6.81,
9.43, 8.78, 9.92}
\fill (\x, 0) circle(.075);

\foreach \x in {0, 3.13, 2.29, 2.09,
5.19, 5.63, 5.31, 6.08,
8.87, 9.25, 8.54}
\draw (\x, 0) circle(.075);

\foreach \x/\lbl in {0/{$\lambda_n, \ldots, \lambda_{k + 1}$}, 2.875/{$\lambda_k, \ldots, \lambda_{k - k_K + 1}$}, 6.25/{$\lambda_{k_1 + k_2}, \ldots, \lambda_{k_1 + 1}$}, 9/{$\lambda_{k_1}, \ldots, \lambda_1$}, 4.5/$\ldots$}
\draw (\x, .25) node[anchor = south]{\lbl};

\foreach \x in {1.5, 4.5, 7.5}
\draw (\x, -.1) node[anchor = north]{$\Omega(\sqrt n)$};

\draw (0, -.1) node[anchor = north]{0};

\end{tikzpicture}

\caption{The distribution of eigenvalues of $B$ ($\circ$) and $\hat B$ ($\bullet$)}\label{eigdist}
\end{figure}

\begin{lemma}\label{eigsep}
Assume~\eqref{B-Bhatnorm} holds.  Then 
\[(p - q)s_{k_1} - 7\sqrt n \leq \lambda_i(B), \lambda_i(\hat B) \leq (p - q)s_1 + 7\sqrt n\]
for $i = 1, \ldots, k_1$, while
\[\lambda_i(B), \lambda_i(\hat B) \leq (p - q)s_{k_1 + 1} + 7\sqrt n\]
for $i = k_1 + 1, \ldots, n$.
\end{lemma}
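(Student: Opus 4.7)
The plan is a direct bookkeeping combination of Lemma~\ref{Beigs}, which gives the eigenvalues of $B$ exactly, with the Weyl-type bound~\eqref{weyl}, which transfers spectral information from $B$ to $\hat B$ at a cost of $7\sqrt n$ per eigenvalue (and which is available under the hypothesis~\eqref{B-Bhatnorm}). So I would not expect any real obstacle here; it is essentially a matter of splitting into two index ranges and invoking the ordering~\eqref{siassumption} of cluster sizes.

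First, for $i = 1,\ldots,k_1$: Lemma~\ref{Beigs} says $\lambda_i(B)=(p-q)s_i$, and since~\eqref{siassumption} gives $s_1\geq s_2\geq\cdots\geq s_{k_1}$ I get $(p-q)s_{k_1}\leq\lambda_i(B)\leq(p-q)s_1$. Then applying~\eqref{weyl} enlarges the interval by $7\sqrt n$ on each side, which yields the claimed two-sided bound on $\lambda_i(\hat B)$ (and the bound on $\lambda_i(B)$ itself is already stronger and so is free).

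Second, for $i=k_1+1,\ldots,n$: again by Lemma~\ref{Beigs}, either $\lambda_i(B) = (p-q)s_i$ with $i \leq k$, in which case $s_i \leq s_{k_1+1}$ by~\eqref{siassumption}, or $\lambda_i(B) = 0$ when $i > k$, which is trivially $\leq (p-q)s_{k_1+1}$. In either sub-case $\lambda_i(B)\leq(p-q)s_{k_1+1}$, and one more application of~\eqref{weyl} promotes this to $\lambda_i(\hat B)\leq(p-q)s_{k_1+1}+7\sqrt n$, finishing the lemma. The only thing worth flagging is that the lemma is really what cashes in the supercluster-separation hypothesis~\eqref{superclustersep}: combined with the above, it is precisely what produces the $\Omega(\sqrt n)$ gap between the top $k_1$ eigenvalues and the rest, which the surrounding discussion (and Figure~\ref{eigdist}) needs in order to justify using $P_{k_1}(\hat B)$ in Step~\ref{step:proj:nonunif} of Algorithm~\ref{nonunifalg}.
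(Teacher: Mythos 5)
Your proof is correct and matches the paper's argument: compute $\lambda_i(B)$ exactly from Lemma~\ref{Beigs}, bound it using the ordering of the $s_i$, and transfer to $\hat B$ via the Weyl bound~\eqref{weyl}. The only cosmetic difference is that the paper invokes~\eqref{superclustersep} where you invoke~\eqref{siassumption} for $s_i \leq s_{k_1+1}$ when $k_1 < i \leq k$; both justifications are valid.
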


\begin{proof}
First we handle $B$.  For $i \leq k_1$ we have
\[(p - q)s_{k_1} \leq \lambda_i(B) = (p - q)s_i \leq (p - q)s_1.\]
By definition of $k_1$, we have $\CC_1 = \{C_1, \ldots, C_{k_1}\}$; thus, by~\eqref{superclustersep} we have
\[\lambda_i(B) = (p - q)s_i \leq (p - q)s_{k_1 + 1}\]
for $i = k_1 + 1, \ldots, k$, and for $i > k$ we have
\[\lambda_i(B) = 0 \leq (p - q)s_{k_1 + 1}.\]
The lemma thus follows by~\eqref{weyl}
\end{proof}

Thus, the eigenvalues of $B$ and $\hat B$ corresponding to the clusters in $\CC_1$ are separated from the remaining eigenvalues by at least $((p - q)c - 14)\sqrt n$, since $s_{k_1} \geq s_{k_1 + 1} + c\sqrt n$ by~\eqref{superclustersep}.  This quantity is positive as long as
\[c > \frac{14}{p - q}.\]

\section{Deviation between the projection operators}\label{projdev}

The following lemma shows an inverse relation between a ``gap'' in the eigenvalues of two symmetric matrices $X$ and $Y$ and the difference between their rank-$r$ projectors (Definition~\ref{rankrproj}).  More precisely, if the largest $r$ eigenvalues of both $X$ and $Y$ are ``well-separated'' from the remaining ones (as is the case for $B$ and $\hat B$ above, with $r = k_1$), then their projectors $P_r(X)$ and $P_r(Y)$ are close in $\ell_2$-norm.

\begin{lemma}\label{projdiff}
Let $X, Y \in \R^{n \times n}$ be symmetric.  Suppose that the largest $r$ eigenvalues of both $X$ and $Y$ are $\geq \beta$, and the remaining $n - r$ eigenvalues of both $X$ and $Y$ are $\leq \alpha$, where $\alpha < \beta$.  Then
\begin{equation}\label{projdiffl2}
||P_r(X) - P_r(Y)||_2 \leq \frac{||X - Y||_2}{\beta - \alpha}
\end{equation}
and
\begin{equation}\label{projdiffFrob}
||P_r(X) - P_r(Y)||_F \leq \frac{\sqrt{2r}||X - Y||_2}{\beta - \alpha}.
\end{equation}
\end{lemma}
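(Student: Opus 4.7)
The plan is to reduce both inequalities to the single estimate $\|(I-Q)P\|_2 \le \|X-Y\|_2/(\beta-\alpha)$, where $P := P_r(X)$ and $Q := P_r(Y)$, and then recover \eqref{projdiffl2} and \eqref{projdiffFrob} by short bookkeeping. The bridge will be the algebraic decomposition
\[P - Q = (I-Q)P - Q(I-P),\]
whose two summands have ranges contained in the orthogonal subspaces $V^\perp$ and $V$ respectively, where $U$ and $V$ denote the ranges of $P$ and $Q$.

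To bound $\|(I-Q)P\|_2$, I would use a Sylvester-type identity combined with an SVD. Since $U$ is $X$-invariant, $XP = PX$; since $V^\perp$ is $Y$-invariant, $Y(I-Q) = (I-Q)Y$. Setting $M := (I-Q)P$, these commutation relations give
\[YM - MX = (I-Q)YP - (I-Q)XP = (I-Q)(Y-X)P,\]
so that $\|YM - MX\|_2 \le \|X-Y\|_2$. I would then take the SVD of $M$ and focus on the top singular triple $(\sigma_1, a_1, b_1)$. Because $M$ vanishes on $U^\perp$ and takes values in $V^\perp$, the right singular vector $b_1$ lies in $U$ and the left singular vector $a_1$ lies in $V^\perp$, with $M b_1 = \sigma_1 a_1$ and $a_1^T M = \sigma_1 b_1^T$. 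Therefore
\[a_1^T (YM - MX) b_1 = \sigma_1\bigl(a_1^T Y a_1 - b_1^T X b_1\bigr) \le \sigma_1(\alpha - \beta),\]
using that $Y|_{V^\perp}$ has eigenvalues $\le \alpha$ and $X|_U$ has eigenvalues $\ge \beta$. Combined with the trivial upper bound $|a_1^T(YM - MX)b_1| \le \|YM - MX\|_2 \le \|X-Y\|_2$, this yields $\sigma_1 = \|(I-Q)P\|_2 \le \|X-Y\|_2/(\beta-\alpha)$. Swapping the roles of $X$ and $Y$ (which swaps $P$ and $Q$) then gives $\|(I-P)Q\|_2 \le \|X-Y\|_2/(\beta-\alpha)$, and hence also $\|Q(I-P)\|_2 \le \|X-Y\|_2/(\beta-\alpha)$ after transposition.

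Writing $\eta := \|X-Y\|_2/(\beta-\alpha)$, I would then derive both norm bounds from these two estimates. For \eqref{projdiffl2}, the orthogonality of the ranges of the two summands in $P - Q = (I-Q)P - Q(I-P)$ gives, for any unit vector $v$,
\[\|(P-Q)v\|^2 = \|(I-Q)Pv\|^2 + \|Q(I-P)v\|^2 \le \eta^2\bigl(\|Pv\|^2 + \|(I-P)v\|^2\bigr) = \eta^2\|v\|^2,\]
where in the middle step I used $\|(I-Q)Pv\| \le \|(I-Q)P\|_2\,\|Pv\| \le \eta\,\|Pv\|$ (since $(I-Q)Pv$ depends on $v$ only through $Pv \in U$) and the analogous bound for $\|Q(I-P)v\|$. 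For \eqref{projdiffFrob}, the same orthogonality gives $\|P-Q\|_F^2 = \|(I-Q)P\|_F^2 + \|Q(I-P)\|_F^2$, and since each summand has rank $\le r$ I can apply $\|A\|_F \le \sqrt{r}\,\|A\|_2$ to each piece, concluding $\|P-Q\|_F^2 \le 2r\eta^2$, which is \eqref{projdiffFrob}. The main technical work lives entirely in the Sylvester/SVD calculation of the middle paragraph; everything else is bookkeeping.
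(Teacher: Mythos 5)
Your proof is correct, and it takes a genuinely different route from the paper. The paper proves the spectral-norm bound via the Cauchy integral (resolvent) formula: both $P_r(X)$ and $P_r(Y)$ are written as contour integrals of $(zI - \cdot)^{-1}$ over a large rectangle enclosing only the top $r$ eigenvalues, the difference of resolvents is factored, and the integral is estimated side by side; the Frobenius bound then follows from $\rk(P_r(X)-P_r(Y)) \le 2r$. Your argument instead avoids complex analysis entirely: you split $P - Q = (I-Q)P - Q(I-P)$, bound each piece via a Sylvester relation $YM - MX = (I-Q)(Y-X)P$ with $M = (I-Q)P$, and extract the operator-norm bound from the top singular triple of $M$ using the quadratic-form estimates $a_1^\top Y a_1 \le \alpha$, $b_1^\top X b_1 \ge \beta$. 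This is essentially a self-contained Davis--Kahan $\sin\Theta$ argument. One small thing worth making explicit is the refinement $\|(I-Q)Pv\| \le \|(I-Q)P\|_2 \|Pv\|$: it holds because $(I-Q)Pv = (I-Q)P(Pv)$ with $P^2=P$, which is what lets the Pythagoras step close cleanly; the naive submultiplicativity bound $\le \|(I-Q)P\|_2\|v\|$ would not suffice. Your Frobenius step also differs cosmetically (you split into two rank-$\le r$ pieces with orthogonal ranges rather than citing $\rk \le 2r$ for the difference), but yields the same constant $\sqrt{2r}$. The resolvent proof is shorter once the Cauchy formula is granted and generalizes readily to eigenvalue groups isolated by an arbitrary contour; your proof is more elementary, needs no complex analysis, and makes the Davis--Kahan structure of the estimate transparent.
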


This can be proved via the Cauchy integral formula for projections, as in~\cite[Lemmas~15-16]{ColeFR17}.  See Appendix~\ref{projdiffproof} for the full proof.  

We can apply the above lemma to $B$ and $\hat B$ to get the following:

\begin{lemma}\label{lemma:projl2norm}
Assume~\eqref{B-Bhatnorm} holds.  Then we have
\begin{equation}\label{projl2norm}
||P_{k_1}(B) - P_{k_1}(\hat B)||_2 \leq \epsilon
\end{equation}
and
\begin{equation}\label{projfrobnorm}
||P_{k_1}(B) - P_{k_1}(\hat B)||_F \leq \sqrt{2k_1}\epsilon,
\end{equation}
provided that
\begin{equation}\label{cepsilonrel}
c \geq \frac{14}{(p - q)\epsilon}.
\end{equation}
\end{lemma}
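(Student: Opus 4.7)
The plan is to invoke Lemma~\ref{projdiff} directly with $X = B$, $Y = \hat B$, and $r = k_1$, so the entire proof reduces to choosing the scalars $\alpha < \beta$ correctly and verifying that the hypotheses of that lemma are met. Lemma~\ref{eigsep} is already tailored to supply exactly these scalars: it tells me that under~\eqref{B-Bhatnorm} the top $k_1$ eigenvalues of both $B$ and $\hat B$ lie above $\beta := (p-q)s_{k_1} - 7\sqrt n$, while the remaining $n - k_1$ eigenvalues of both matrices lie below $\alpha := (p-q)s_{k_1+1} + 7\sqrt n$.

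Next I would compute the gap. Using the intercluster separation~\eqref{superclustersep}, which gives $s_{k_1} \geq s_{k_1+1} + c\sqrt n$, I get
\[
\beta - \alpha \;\geq\; (p-q)(s_{k_1} - s_{k_1+1}) - 14\sqrt n \;\geq\; \bigl((p-q)c - 14\bigr)\sqrt n.
\]
The hypothesis $c \geq 14/((p-q)\epsilon)$ gives $(p-q)c \geq 14/\epsilon$, so (for $\epsilon$ suitably small, which is justified by $\epsilon = O(p-q)$) $\beta - \alpha \geq 7\sqrt n/\epsilon$. Together with the bound $\|B - \hat B\|_2 \leq 7\sqrt n$ that we are assuming, Lemma~\ref{projdiff} then yields
\[
\|P_{k_1}(B) - P_{k_1}(\hat B)\|_2 \;\leq\; \frac{\|B - \hat B\|_2}{\beta - \alpha} \;\leq\; \frac{7\sqrt n}{7\sqrt n/\epsilon} \;=\; \epsilon,
\]
proving~\eqref{projl2norm}. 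The Frobenius bound~\eqref{projfrobnorm} follows by the same substitution into the second inequality of Lemma~\ref{projdiff}, picking up the extra $\sqrt{2k_1}$ factor.

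This proof is essentially a bookkeeping step gluing together Lemmas~\ref{lemma:B-Bhatnorm}, \ref{eigsep}, and~\ref{projdiff}, so no real obstacle arises; the only subtlety worth being careful about is the constant chasing to confirm that the ``$-14$'' in $(p-q)c - 14$ does not spoil the clean bound $\beta - \alpha \geq 7\sqrt n/\epsilon$. This is the reason the statement packages a slack factor of $2$ into the hypothesis $c \geq 14/((p-q)\epsilon)$ rather than the tighter $7/((p-q)\epsilon)$ one might naively expect, and it is what I would verify most carefully.
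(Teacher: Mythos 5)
Your proof is correct and follows essentially the same route as the paper: invoke Lemma~\ref{projdiff} with $X=B$, $Y=\hat B$, with gap parameters supplied by Lemma~\ref{eigsep}, and use~\eqref{superclustersep} together with $c \geq 14/((p-q)\epsilon)$ to absorb the $-14\sqrt n$ slack. Incidentally, your assignment $\beta = (p-q)s_{k_1}-7\sqrt n$, $\alpha = (p-q)s_{k_1+1}+7\sqrt n$ is the correct one; the paper's proof writes these with the labels swapped (so that $\alpha > \beta$), which is evidently a typo that you have silently fixed.
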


\begin{proof}
By~\eqref{B-Bhatnorm} and Lemma~\ref{eigsep}, we can apply Lemma~\ref{projdiff} with 
\[X = B,\quad Y = \hat B,\quad \alpha = (p - q)s_{k_1} - 7\sqrt n,\quad \beta = (p - q)s_{k_1 + 1} + 7\sqrt n\]
to get
\begin{eqnarray*}
||P_{k_1}(B) - P_{k_1}(\hat B)||_2 	& \leq	& \frac{7\sqrt n}{(p - q)(s_{k_1} - s_{k_1 + 1}) - 14\sqrt n} \\
	& \leq	& \frac{7\sqrt n}{(p - q)c\sqrt n - 14\sqrt n}	\\
	& \leq	& \frac{14}{(p - q)c}	\\
	& \leq	& \epsilon,
\end{eqnarray*}
where the second inequality follows from~\eqref{superclustersep}.  We get~\eqref{projfrobnorm} similarly by applying~\eqref{projdiffFrob}.
\end{proof}

Thus, we see that $c$ has an inverse dependence on $\epsilon$ and $p - q$.  In order for the proofs in Section~\ref{exactcluster} to go through, we will require that $\epsilon = O(p - q)$.  Thus, by~\eqref{cepsilonrel} $c$ must be $\Omega((p - q)^{-2})$.

Why is Lemma~\ref{lemma:projl2norm} useful?  Observe that
\[P_{k_1}(B) = \sum_{i = 1}^{k_1}\frac1{s_i}\1_{C_i}\1_{C_i}^\top.\]
Thus, the nonzero entries of $P_{k_1}(B)$ tell us precisely which pairs of vertices belong to the same cluster $C \in \CC_1$.  Of course our algorithm does not have access to this matrix, but by Lemma~\ref{lemma:projl2norm} $P_{k_1}(\hat B) \approx P_{k_1}(B)$ a.s., so we should be able to use $P_{k_1}(\hat B)$ in place of $P_{k_1}(B)$ to recover the clusters.  Sections~\ref{recsingle} and~\ref{delandrec} go over the details of this approach.

\section{Recovering a single cluster}\label{recsingle}

In this section we show how to use the spectral results in Sections~\ref{BBhateigs} and~\ref{projdev} to recover a single cluster.  In Section~\ref{approxcluster} we will show how to construct a.s.\ a set $W$ with large intersection with a single cluster $C_i$ (Steps~\ref{step:Wj:nonunif}-\ref{step:jstar:nonunif} of Algorithm~\ref{nonunifalg}), and in Section~\ref{exactcluster} we will show how to recover $C_i$ exactly a.s.\ by looking at the number of neighbors in $W$ of each vertex (Step~\ref{step:id:nonunif}).  In Section~\ref{delandrec} we will show how to recover \emph{all} clusters using this procedure.

\subsection{Constructing an approximate cluster}\label{approxcluster}

We do not assume that our algorithm has access to the exact cluster sizes, so let us begin by showing that $\hat s$ as defined in Step~\ref{step:shat} of Algorithm~\ref{nonunifalg} is  a good approximation to the size of the clusters in $\CC_1$ (recall that by~\eqref{superclusters} they are all approximately the same size).

\begin{lemma}
Assume~\eqref{B-Bhatnorm} holds and define  
\begin{equation*}\label{shatdef}
\hat s := \frac{\lambda_1(\hat B) + 7\sqrt n}{p - q}.
\end{equation*}
Then
\begin{equation}\label{shatbounds}
s_{k_1} \leq \ldots \leq s_1 \leq \hat s \leq (1 + 2\epsilon)s_{k_1}.
\end{equation}
\end{lemma}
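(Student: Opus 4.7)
The chain $s_{k_1} \leq \ldots \leq s_1$ is just the ordering assumption~\eqref{siassumption}, so the real content is to sandwich $\hat s$ between $s_1$ and $(1+2\epsilon)s_{k_1}$. The plan is to combine three inputs: Lemma~\ref{Beigs} to identify $\lambda_1(B)$ exactly, Weyl's inequality~\eqref{weyl} to transfer this to $\lambda_1(\hat B)$, and the structural assumptions~\eqref{withinsupercluster} and~\eqref{cepsilonrel} to absorb the additive $\sqrt n$ errors into a multiplicative $\epsilon$ error.

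First I would record that $\lambda_1(B) = (p-q)s_1$ by Lemma~\ref{Beigs} (using $s_1 \geq \ldots \geq s_k$), and then invoke Weyl's bound~\eqref{weyl} which gives $(p-q)s_1 - 7\sqrt n \leq \lambda_1(\hat B) \leq (p-q)s_1 + 7\sqrt n$. Plugging the lower Weyl bound into the definition of $\hat s$ yields $\hat s \geq s_1$ after the two $7\sqrt n$ terms cancel, which handles the lower half of~\eqref{shatbounds}.

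For the upper half, the upper Weyl bound gives $\hat s \leq s_1 + 14\sqrt n/(p-q)$. Now I would apply~\eqref{withinsupercluster} to replace $s_1$ by $(1+\epsilon)s_{k_1}$, getting
\[\hat s \leq (1+\epsilon)s_{k_1} + \frac{14\sqrt n}{p-q}.\]
The final step is to show the additive term is at most $\epsilon s_{k_1}$. Since $s_{k_1} \geq c\sqrt n$ by~\eqref{siassumption} and $c \geq 14/((p-q)\epsilon)$ by~\eqref{cepsilonrel}, we get $\epsilon s_{k_1} \geq \epsilon c \sqrt n \geq 14\sqrt n/(p-q)$, yielding $\hat s \leq (1+2\epsilon)s_{k_1}$.

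There is no real obstacle here: the argument is essentially bookkeeping once the right bounds are lined up. The only mildly subtle point is noticing that the $+7\sqrt n$ in the definition of $\hat s$ is specifically chosen so that the lower-bound Weyl error cancels exactly, which is what makes $\hat s \geq s_1$ come out as a clean inequality rather than requiring the $c\sqrt n$ slack. The $c\sqrt n$ slack is then spent entirely on the upper bound, exploiting the hypothesis~\eqref{cepsilonrel} that couples $c$ to $\epsilon$ and $p-q$.
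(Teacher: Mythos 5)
Your argument matches the paper's proof essentially step for step: both identify $\lambda_1(B) = (p-q)s_1$ via Lemma~\ref{Beigs}, apply Weyl's inequality to get $\hat s \geq s_1$ from the lower bound, and then use~\eqref{withinsupercluster} together with $s_{k_1} \geq c\sqrt n$ and~\eqref{cepsilonrel} to absorb the $14\sqrt n/(p-q)$ term into $\epsilon s_{k_1}$. The only cosmetic difference is that the paper factors out $s_{k_1}$ before invoking~\eqref{cepsilonrel}, whereas you bound the additive error directly; the content is identical.
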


\begin{proof}
By Lemma~\ref{Beigs}, $\lambda_1(B) = (p - q)s_1$, so by~\eqref{B-Bhatnorm} and Weyl's inequalities we have 
\[(p - q)s_1 - 7\sqrt n \leq \lambda_1(\hat B) \leq (p - q)s_1 + 7\sqrt n.\]
Thus, $\hat s$ is an upper bound on $s_1$.  Finally, as $s_1 \leq (1 + \epsilon)s_{k_1}$ by equation~\eqref{withinsupercluster}, we have
\begin{eqnarray*}
\hat s	& \leq	& \frac{\lambda_1(B) + 14\sqrt n}{p - q}	\\
		& =		& s_1 + \frac{14\sqrt n}{p - q}				\\
		& =		& s_{k_1}\left(\frac{s_1}{s_{k_1}} + \frac{14\sqrt n}{(p - q)s_{k_1}}\right)	\\
		& \leq	& s_{k_1}\left(1 + \epsilon + \frac{14}{(p - q)c}\right)				\\
		& \leq	& (1 + 2\epsilon)s_{k_1}. 
\end{eqnarray*}
Note that the last inequality follows from~\eqref{cepsilonrel}.
\end{proof}

We will now show how to use $P_{k_1}(\hat B)$ to construct an ``approximate cluster,'' (a set with small symmetric difference with one of the clusters) as in Steps~\ref{step:Wj:nonunif}-\ref{step:jstar:nonunif} of Algorithm~\ref{nonunifalg}.  The following lemma gives a way to produce such an approximate cluster using only $\hat B$:

\begin{lemma}\label{largenorm=>goodB}
Assume~\eqref{B-Bhatnorm} holds.  If $|W| \leq (1 + \epsilon)\hat s$ and $||P_{k_1}(\hat B)\1_W||_2 \geq (1 - 3\epsilon)\sqrt s_{k_1}$, then $|W \cap C_i| \geq (1 - 6\epsilon)s_{k_1}$ for some $C_i \in \CC_1$.
\end{lemma}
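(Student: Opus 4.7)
The plan is to convert the norm lower bound on $P_{k_1}(\hat B)$ into an analogous bound on $P_{k_1}(B)$, whose action on indicator vectors has an exact closed form that reveals the cluster structure. Concretely, from the block diagonal structure of $B$ together with Lemma~\ref{eigsep}, the top-$k_1$ eigenspace of $B$ is spanned by the pairwise orthogonal cluster indicators $\1_{C_1}, \ldots, \1_{C_{k_1}}$, so
\[
P_{k_1}(B) = \sum_{i = 1}^{k_1} \frac{1}{s_i} \1_{C_i}\1_{C_i}^\top.
\]
Writing $w_i := |W \cap C_i|$, this gives $P_{k_1}(B)\1_W = \sum_{i=1}^{k_1} (w_i/s_i)\1_{C_i}$, so $\|P_{k_1}(B)\1_W\|_2^2 = \sum_{i=1}^{k_1} w_i^2/s_i$.

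To transfer the hypothesis, I would apply the triangle inequality and Lemma~\ref{lemma:projl2norm}:
\[
\|P_{k_1}(B)\1_W\|_2 \geq \|P_{k_1}(\hat B)\1_W\|_2 - \|P_{k_1}(B) - P_{k_1}(\hat B)\|_2 \sqrt{|W|} \geq (1-3\epsilon)\sqrt{s_{k_1}} - \epsilon\sqrt{|W|}.
\]
Combined with $|W| \leq (1+\epsilon)\hat s \leq (1+\epsilon)(1+2\epsilon)s_{k_1}$ (using~\eqref{shatbounds}), this yields $\|P_{k_1}(B)\1_W\|_2^2 \geq (1 - O(\epsilon))s_{k_1}$. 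To extract a single large cluster, let $i^* := \argmax_{i \leq k_1} w_i$; since $s_i \geq s_{k_1}$ for $i \leq k_1$ and $w_i \leq w_{i^*}$,
\[
(1 - O(\epsilon))s_{k_1} \leq \sum_{i=1}^{k_1} \frac{w_i^2}{s_i} \leq \frac{w_{i^*}}{s_{k_1}}\sum_{i=1}^{k_1} w_i \leq \frac{w_{i^*}}{s_{k_1}}|W| \leq (1+O(\epsilon))w_{i^*},
\]
so $w_{i^*} \geq (1 - O(\epsilon))s_{k_1}$, as desired.

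The main obstacle is the bookkeeping needed to land exactly on the stated constant $6$. Three separate perturbations --- $\|P_{k_1}(B) - P_{k_1}(\hat B)\|_2 \leq \epsilon$, $\hat s \leq (1+2\epsilon)s_{k_1}$, and $|W| \leq (1+\epsilon)\hat s$ --- each contribute $O(\epsilon)$ corrections that combine multiplicatively, so carefully expanding products of the form $(1+a\epsilon)(1+b\epsilon)$ and absorbing higher-order terms like $\epsilon^2$ (using that $\epsilon$ is sufficiently small) is needed to reach the $6\epsilon$ bound. The linear-algebraic content of the argument is otherwise quite clean, resting only on the explicit form of $P_{k_1}(B)$ and the operator-norm closeness $P_{k_1}(\hat B) \approx P_{k_1}(B)$ supplied by Lemma~\ref{lemma:projl2norm}.
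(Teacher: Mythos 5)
Your argument matches the paper's through the triangle-inequality step: both derive $\|P_{k_1}(B)\1_W\|_2 \geq (1-5\epsilon)\sqrt{s_{k_1}}$ (using $|W| \leq (1+4\epsilon)s_{k_1}$) and use the exact formula $\|P_{k_1}(B)\1_W\|_2^2 = \sum_{i\leq k_1} w_i^2/s_i$. The divergence is in the extraction step, and it is not merely bookkeeping. Your chain $\sum_i w_i^2/s_i \leq (w_{i^*}/s_{k_1})\sum_i w_i \leq (1+4\epsilon)w_{i^*}$ is correct, but solving $(1-5\epsilon)^2 s_{k_1} \leq (1+4\epsilon)w_{i^*}$ yields $w_{i^*} \geq \frac{(1-5\epsilon)^2}{1+4\epsilon}s_{k_1} \approx (1-14\epsilon)s_{k_1}$, and no amount of careful expansion of $(1+a\epsilon)(1+b\epsilon)$ factors brings $14$ down to the stated $6$. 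The loss is structural: bounding $w_i^2 \leq w_{i^*}w_i$ for every $i\neq i^*$ is very loose in exactly the regime that matters (one dominant $w_{i^*}$ with the rest small), where $\sum_{i\neq i^*} w_i^2 \ll w_{i^*}\sum_{i\neq i^*}w_i$.

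The paper instead runs a contradiction argument: assume $w_i \leq t$ for all $i\leq k_1$, observe that $\max\{\sum x_i^2/s_i : \sum x_i \leq (1+4\epsilon)s_{k_1},\ 0 \leq x_i \leq t\}$ is attained at $x_{k_1}=t,\ x_{k_1-1}=(1+4\epsilon)s_{k_1}-t$, and solve the resulting quadratic in $t$ to show $t=(1-6\epsilon)s_{k_1}$ is infeasible. Equivalently (and closer to your framing), you could replace your step with the sharper estimate
\[
\sum_{i\leq k_1}\frac{w_i^2}{s_i} \;\leq\; \frac{w_{i^*}^2 + \bigl(\sum_{i\neq i^*}w_i\bigr)^2}{s_{k_1}} \;\leq\; \frac{w_{i^*}^2 + (|W|-w_{i^*})^2}{s_{k_1}},
\]
which follows from $\sum_{i\neq i^*}w_i^2 \leq \bigl(\sum_{i\neq i^*}w_i\bigr)^2$ for nonnegative $w_i$; solving the quadratic $(1-5\epsilon)^2 s_{k_1}^2 \leq w_{i^*}^2 + ((1+4\epsilon)s_{k_1}-w_{i^*})^2$ recovers the $6\epsilon$ constant. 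Your weaker constant would still let the algorithm succeed with adjusted thresholds downstream (e.g.\ the $10\epsilon$ in Step~\ref{step:id:nonunif} and the condition $\epsilon < (p-q)/20$ would both degrade), but it does not prove the lemma as stated.
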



\begin{proof}
Observe that by~\eqref{shatbounds} we have
\[|W| \leq (1 + \epsilon)(1 + 2\epsilon)s_{k_1} \leq (1 + 4\epsilon)s_{k_1},\]
provided $\epsilon \leq 1 / 2$.  By the triangle inequality, 
\begin{eqnarray}
||P_{k_1}(B)\1_W||_2	& \geq	& ||P_{k_1}(\hat B)\1_W||_2 - ||P_{k_1}(B) - P_{k_1}(\hat B)||_2||\1_W||_2\notag	 \\
& \geq	& (1 - 3\epsilon)\sqrt{s_{k_1}} - \epsilon\sqrt{(1 + 4\epsilon)s_{k_1}}\notag			\\
& \geq	& (1 - 5\epsilon)\sqrt{s_{k_1}}.\label{P1Wtriangle:superclusters}
\end{eqnarray}
We will show that in order for this to hold, $W$ must have large intersection with some cluster in $\CC_1$.

Fix $t$ such that $\frac{(1 + 4\epsilon)s_{k_1}}2 \leq t \leq s_{k_1}$.  Assume by way of contradiction that $|W \cap C_i| \leq t$ for all $i \leq k_1$.  Observe that 
\begin{equation}\label{sumofsquares:superclusters}
||P_{k_1}(B)\1_W||_2^2 = \sum_{i = 1}^k\frac1{s_i}|W \cap C_i|^2.
\end{equation}
Consider the optimization problem
\begin{eqnarray*}
\max			& & \sum_{i = 1}^{k_1}\frac1{s_i}x_i^2	\\
\textrm{s.t.}	& & \sum_{i = 1}^{k_1}x_i \leq (1 + 4\epsilon)s_{k_1},	\\
				& & 0 \leq x_i \leq t \textrm{ for } i = 1, \ldots, k_1,
\end{eqnarray*}
with variable $x_i$ representing $|W \cap C_i|$.  It is easy to see that the maximum occurs when $x_{k_1} = t, x_{k_1 - 1} = (1 + 4\epsilon)s_{k_1} - t$, $x_i = 0$ for all $i  < k_1 - 1$, and the maximum is $\frac{t^2}{s_{k_1}} + \frac{((1 + 4\epsilon)s_{k_1} - t)^2}{s_{k - 1}}$.  Note that the value of of $x_{k_1 - 1}$ is legal by our assumption that $t \geq \frac{(1 + 4\epsilon)s_{k_1}}2$.  Thus, by~(\ref{P1Wtriangle:superclusters}) and~(\ref{sumofsquares:superclusters}) we have
\[(1 - 5\epsilon)^2s_{k_1} \leq ||P_{k_1}(B)\1_W||_2^2 \leq \frac{t^2}{s_{k_1}} + \frac{((1 + 4\epsilon)s_{k_1} - t)^2}{s_{k_1 - 1}} \leq \frac{t^2}{s_{k_1}} + \frac{((1 + 4\epsilon)s_{k_1} - t)^2}{s_{k_1}}.\]
Solving for $t$, this implies
\[t \geq \left(\frac{1 + 4\epsilon}2 + \frac12\sqrt{1 - 28\epsilon + 34\epsilon^2}\right)s_{k_1}.\]
If we make $\epsilon$ small enough ($\epsilon \leq .01$ suffices), then this is $> (1 - 6\epsilon)s_{k_1}$.  Thus, if we pick $t = (1 - 6\epsilon)s_{k_1}$ we have a contradiction.

Therefore, it must be the case that $|W \cap C_i| > (1 - 6\epsilon)s_{k_1}$ for some $i \leq k_1$.  Note that for the proof to go through we require $\frac{1 + 4\epsilon}2 \leq 1 - 6\epsilon$, which is certainly satisfied if $\epsilon \leq .01$.
\end{proof}

This lemma shows that we can a.s.\ produce an approximate cluster by trying all sets $W \subseteq V$ with $|W| \leq (1 + \epsilon)\hat s$ and taking the one which maximizes $||P_{k_1}(\hat B)\1_W||_2$.  However, this would take $\Omega(n^{s_{k_1}})$ time, so we need to narrow the search space.  The next lemma shows that we can, in fact, produce such a $W$ by defining 
\[W_v := \left\{u : \textrm{the $(u, v)$ entry of $P_{k_1}(\hat B)$ is} \geq \frac1{2\hat s}\right\}\]
for each $v \in [n]$ and taking the $W_v$ which maximizes $||P_{k_1}(\hat B)\1_{W_v}||_2$.  This is exactly what is done in Steps~\ref{step:Wj:nonunif}-\ref{step:jstar:nonunif} of Algorithm~\ref{nonunifalg}.

\begin{lemma}\label{existsgoodB}
Assume~\eqref{B-Bhatnorm} holds.  Then there exists $v \in [n]$ such that $|W_v| \leq (1 + \epsilon)\hat s$ and $||P_{k_1}(\hat B)\1_{W_v}||_2 \geq (1 - 3\epsilon)\sqrt{s_{k_1}}$.
\end{lemma}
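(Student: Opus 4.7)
The plan is to combine the Frobenius bound in Lemma~\ref{lemma:projl2norm} with a pigeonhole argument on columns. Write $P := P_{k_1}(B)$ and $\hat P := P_{k_1}(\hat B)$ with entries $p_{uv}, \hat p_{uv}$, and let $\delta_v := \sum_u (\hat p_{uv} - p_{uv})^2$, so that $\sum_v \delta_v = \|\hat P - P\|_F^2 \leq 2k_1\epsilon^2$ by~\eqref{projfrobnorm}. Recall the explicit form $P = \sum_{j=1}^{k_1} s_j^{-1}\1_{C_j}\1_{C_j}^\top$, so for $v \in C_i$ with $C_i \in \CC_1$ the column $P e_v$ equals $s_i^{-1}\1_{C_i}$.

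First I would restrict the Frobenius sum to columns indexed by $\bigcup_{j \leq k_1} C_j$; since this set has at least $k_1 s_{k_1}$ elements, pigeonhole yields some $v \in C_i$ for some $C_i \in \CC_1$ with $\delta_v \leq 2\epsilon^2/s_{k_1}$. I claim this $v$ is the desired column.

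Next I would bound $|W_v \triangle C_i|$ by comparing the threshold $1/(2\hat s)$ against the entries of $P$: since $p_{uv} = 1/s_i \geq 1/\hat s$ for $u \in C_i$ (by~\eqref{shatbounds}) and $p_{uv} = 0$ for $u \notin C_i$, every vertex in $W_v \triangle C_i$ witnesses a deviation of at least $1/(2\hat s)$ and therefore contributes at least $1/(4\hat s^2)$ to $\delta_v$; hence $\max\{|W_v \setminus C_i|,\,|C_i \setminus W_v|\} \leq 4\hat s^2 \delta_v \leq 8(1+2\epsilon)^2\epsilon^2 s_{k_1}$. For sufficiently small $\epsilon$ this forces $4\hat s^2 \delta_v \leq \epsilon \hat s$, so $|W_v| \leq s_i + |W_v \setminus C_i| \leq \hat s + \epsilon\hat s = (1+\epsilon)\hat s$, which is the first required inequality.

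Finally, the triangle inequality together with~\eqref{projl2norm} gives $\|\hat P\1_{W_v}\|_2 \geq \|P\1_{W_v}\|_2 - \epsilon\sqrt{|W_v|}$, while the explicit form of $P$ gives $\|P\1_{W_v}\|_2^2 \geq |W_v \cap C_i|^2/s_i \geq (s_i - \epsilon\hat s)^2/s_i \geq s_{k_1}(1 - O(\epsilon))$; combined with $\epsilon\sqrt{|W_v|} \leq (1+O(\epsilon))\epsilon\sqrt{s_{k_1}}$, the resulting bound exceeds $(1 - 3\epsilon)\sqrt{s_{k_1}}$ for $\epsilon$ small enough. The main obstacle is purely arithmetic bookkeeping: tracking the $O(\epsilon^2)$ slacks from the pigeonhole step, the ratio $\hat s/s_{k_1}$, and the triangle inequality, and confirming they all fit inside the $(1+\epsilon)\hat s$ and $(1-3\epsilon)\sqrt{s_{k_1}}$ margins. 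The pigeonhole reduction itself is essentially free given Lemma~\ref{lemma:projl2norm}.
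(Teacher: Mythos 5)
Your proposal is correct and follows essentially the same route as the paper's proof: locate a good column by averaging the Frobenius error of $P_{k_1}(\hat B) - P_{k_1}(B)$ over the columns indexed by $\bigcup_{C\in\CC_1}C$, observe that each vertex in $W_v\triangle C_i$ contributes at least $1/(4\hat s^2)$ to the column-$v$ error because the true entries of $P_{k_1}(B)$ are either $0$ or $\geq 1/\hat s$ while the threshold is $1/(2\hat s)$, and then close with the triangle inequality and~\eqref{projl2norm}. The only cosmetic difference is that you track per-column squared errors $\delta_v$ and invoke $n_1\ge k_1 s_{k_1}$ up front, whereas the paper counts entrywise disagreements between the rounded matrices $H$ and $\hat H$ and then averages; the two bookkeeping schemes give identical bounds.
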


\begin{proof}
Let $H = (h_{uv})_{u, v = 1}^n$ be the true cluster matrix of $\CC_1$, i.e.
\[h_{uv} := \left\{
\begin{array}{ll}
1	& \textrm{if $u, v \in C$ for some $C \in \CC_1$}	\\
0	& \textrm{else}
\end{array}
\right..\]
This is the matrix that results from rounding the nonzero entries of $P_{k_1}(B)$ to 1.  Now we similarly define $\hat H = (\hat h_{uv})_{u, v = 1}^n$ to be the matrix that results from rounding the ``large'' entries of $P_{k_1}(\hat B)$ to 1:
\[\hat h_{uv} := \left\{
\begin{array}{ll}
1	& \textrm{if the $(u, v)$ entry of $P_{k_1}(\hat B)$ is $\displaystyle\geq \frac1{2\hat s}$}	\\
0	& \textrm{else}
\end{array}
\right..\]
Observe that column $v$ of $\hat H$ is $\1_{W_v}$.

Now consider the errors between $H$ and $\hat H$.  By definition of $\hat H$, each error contributes $\geq \frac1{4\hat s^2}$ to $||P_{k_1}(B) - P_{k_1}(\hat B)||_F^2$.  Thus, by Lemma~\ref{lemma:projl2norm} we have
\[\frac1{4\hat s^2} \cdot \left(
\begin{array}{c}
\textrm{\# errors in}	\\
\textrm{cols.\ } v \in \CC_1
\end{array}
\right) \leq ||P_{k_1}(B) - P_{k_1}(\hat B)||_F^2 \leq 2k_1\epsilon^2.\]
Let $n_1 := s_1 + \ldots + s_{k_1} =$ the number of vertices (columns) in $\CC_1$.  Averaging over the columns in $\CC_1$, there must exist a vertex $v \in \CC_1$ with at most $8k_1\epsilon^2\hat s^2 / n_1$ errors.  Let $C_i$ be the cluster containing $v$.  Then by~\eqref{shatbounds}
\begin{eqnarray*}
|W_v \setminus C_i| + |C_i \setminus W_v|	& =		& \textrm{\# errors in column $v$ of $\hat H$}	\\
	& \leq	& \frac{8k_1\epsilon^2\hat s^2}{n_1}	\\
	& \leq	& \frac{8k_1\epsilon^2(1+ 2\epsilon)^2s_{k_1}^2}{n_1}	\\
	& \leq	& \frac{8n_1\epsilon^2(1 + 2\epsilon)^2s_{k_1}}{n_1}	\\
	& \leq	& 9\epsilon^2s_{k_1}	\\
	& \leq 	& \epsilon s_{k_1}.
\end{eqnarray*}
Thus,
\[|W_v \cap C_i| = s_i - |C_i \setminus W_v| \geq s_i - \epsilon s_{k_1} \geq (1 - \epsilon)s_i,\]
and
\[|W_v| = |W_v \cap C_i| + |W_v \setminus C_i| \leq s_i + \epsilon s_{k_1} \leq (1 + \epsilon)s_i \leq (1 + \epsilon)\hat s.\]
Finally, we must argue that $||P_{k_1}(\hat B)\1_{W_v}||_2 \geq (1 - 3\epsilon)\sqrt{s_{k_1}}$.  First,
\begin{eqnarray*}
||P_{k_1}(B)\1_{W_v}||_2^2	& =		& \sum_{j = 1}^{k_1}\frac{|W_v \cap C_j|^2}{s_j}	\\
	& \geq	& \frac{|W_v \cap C_i|^2}{s_i}	\\
	& \geq	& (1 - \epsilon)^2s_i.
\end{eqnarray*}
Then by the triangle inequality and~\eqref{projl2norm}
\begin{eqnarray*}
||P_{k_1}(\hat B)\1_{W_v}||_2	& \geq	& ||P_{k_1}(B)\1_{W_v}||_2 - ||P_{k_1}(B) - P_{k_1}(\hat B)||_2||\1_{W_v}||_2	\\
						& \geq	& (1 - \epsilon)\sqrt{s_i} - \epsilon\sqrt{|W|}	\\
						& \geq	& (1 - \epsilon)\sqrt{s_i} - \epsilon\sqrt{(1 + \epsilon)s_i}	\\
						& \geq 	& (1 - 3\epsilon)\sqrt{s_i}.
\end{eqnarray*}
This completes the proof.
\end{proof}

Lemmas~\ref{largenorm=>goodB} and~\ref{existsgoodB} fit together as follows: Lemma~\ref{largenorm=>goodB} shows that any set $W$ such that $W \leq (1 + \epsilon)\hat s$ and $||P_{k_1}(\hat B)\1_W||_2 \geq (1 - 3\epsilon)\sqrt{s_{k_1}}$ must come mostly from a single cluster, while Lemma~\ref{existsgoodB} shows that there must be such a $W$ among the $W_v$.  Thus, we can a.s.\ produce an approximate cluster $W$ by simply taking the $W_v$ such that $|W_v| \leq (1 + \epsilon)\hat s$ and $||P_{k_1}(\hat B)\1_{W_v}||_2$ is maximum. 

Note that this approach does not require any access to $s_1, \ldots, s_m$.  However, we assume $k_1, \ldots, k_K$ are known so that we know how many eigenvectors to project onto (i.e., $k = k_1$).

\subsection{Recovering the cluster exactly}\label{exactcluster}

Once we have a set $W$ with small symmetric difference with a cluster $C_i$, we show how to recover $C_i$ \emph{exactly} a.s.\ by looking at the number of neighbors each vertex has in $W$.  First, we show that vertices in $C_i$ are distinguished from those outside $C_i$ by their number of neighbors in $C_i$ itself (Lemma~\ref{recoverclusterswhp}).  Then we show that using $W$ in place of $C_i$ does not throw things off by too much (Lemma~\ref{recoverexactly}).

\begin{lemma}\label{recoverclusterswhp}
Consider cluster $C_i$ and vertex $u \in [n]$.  If $u \in C_i$, then
\begin{equation}\label{jinCi}
|N_{\hat G}(u) \cap C_i| \geq (p - \epsilon)s_i
\end{equation}
with probability $\geq 1 - e^{-\epsilon^2s_i}$, and if $u \notin C_i$, then
\begin{equation}\label{jnotinCi}
|N_{\hat G}(u) \cap C_i| \leq (q + \epsilon)s_i
\end{equation}
with probability $\geq 1 - e^{-\epsilon^2s_i}$.
\end{lemma}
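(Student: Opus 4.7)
Both bounds are standard large-deviation estimates, since in either case $|N_{\hat G}(u) \cap C_i|$ is a sum of independent Bernoulli random variables. The plan is to identify the right random variable, compute its mean, and apply a Chernoff/Hoeffding tail bound.

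First I would handle the case $u \in C_i$. Writing $X := |N_{\hat G}(u) \cap C_i| = \sum_{v \in C_i \setminus \{u\}} \1[uv \in \hat E]$, the summands are independent Bernoulli$(p)$ random variables by the definition of $\mathcal G(n, \mathcal C, p, q)$, so $\expval[X] = (s_i - 1)p$. We want to bound $\prb[X < (p - \epsilon)s_i]$. Since $(p - \epsilon)s_i = \expval[X] - (\epsilon s_i - p)$ and $\epsilon s_i - p > 0$ for all sufficiently large $n$ (using $s_i \geq c\sqrt n$), Hoeffding's inequality gives
\[\prb[X < (p - \epsilon)s_i] \leq \exp\!\left(-\frac{2(\epsilon s_i - p)^2}{s_i - 1}\right) \leq e^{-\epsilon^2 s_i},\]
where the last inequality holds comfortably for $n$ large (with room to spare, since $(\epsilon s_i - p)^2 / (s_i - 1) \to \epsilon^2 s_i$ as $n \to \infty$).

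The case $u \notin C_i$ is completely analogous: now $X = \sum_{v \in C_i} \1[uv \in \hat E]$ is a sum of $s_i$ independent Bernoulli$(q)$ variables with $\expval[X] = s_i q$, and Hoeffding yields
\[\prb[X > (q + \epsilon)s_i] = \prb[X - \expval[X] > \epsilon s_i] \leq e^{-2\epsilon^2 s_i} \leq e^{-\epsilon^2 s_i}.\]

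There is essentially no obstacle here; the only minor subtlety is the off-by-one in the first case (the edge $uu$ is absent, so the sum has $s_i - 1$ rather than $s_i$ terms), which forces the small additive $p$ correction inside the exponent but is absorbed into the $\epsilon^2 s_i$ bound for $n$ large. Everything else is a direct application of Hoeffding's inequality to a sum of independent indicator variables.
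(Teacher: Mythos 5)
Your proposal is correct and is precisely the standard Chernoff/Hoeffding argument the paper has in mind; the paper itself omits the proof and simply refers to Lemma~19 of the companion paper~\cite{ColeFR17}, which uses the same concentration bound. The only wrinkle you need to address -- the $s_i - 1$ vs.\ $s_i$ discrepancy when $u \in C_i$, which shifts the mean down by $p$ -- you handle correctly, observing that the slack in Hoeffding's exponent (a factor of $2$ versus the claimed $1$) absorbs the $O(1)$ correction once $s_i \geq c\sqrt n$ is large. Nothing is missing.
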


The proof is essentially the same as that of~\cite[Lemma~19]{ColeFR17} and is therefore left as an exercise.

\begin{lemma}\label{recoverexactly}
Assume~\eqref{B-Bhatnorm} holds.  Suppose $|W| \leq (1 + \epsilon)\hat s$ and $|W \cap C_i| \geq (1 - 6\epsilon)s_{k_1}$ for some $C_i \in \CC_1$.  Then 
\begin{enumerate}[a)]
\item	If $u \in C_i$ and $u$ satisfies~(\ref{jinCi}), then $|N_{\hat G}(u) \cap W| \geq (p - 10\epsilon)\hat s$.\label{recoverclustera:agnostic}
\item	If $u \in [n] \setminus C_i$ and $u$ satisfies~(\ref{jnotinCi}), then $|N_{\hat G}(u) \cap W| \leq (q + 10\epsilon)\hat s$.\label{recoverclusterb:agnostic}
\end{enumerate}
\end{lemma}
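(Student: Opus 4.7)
The plan is to reduce everything to a symmetric-difference bound between $W$ and $C_i$, then use the approximation $\hat s \approx s_{k_1} \approx s_i$ (valid because $C_i \in \CC_1$) to convert quantities expressed in $s_i, s_{k_1}$ into the form $(p-10\epsilon)\hat s$ or $(q+10\epsilon)\hat s$ demanded in the conclusion.

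First I would establish the two one-sided size bounds. From $|W \cap C_i| \geq (1-6\epsilon)s_{k_1}$ and $s_i \leq (1+\epsilon)s_{k_1}$ (since $C_i \in \CC_1$, by~\eqref{withinsupercluster}), we obtain
\[|C_i \setminus W| \;=\; s_i - |W \cap C_i| \;\leq\; (1+\epsilon)s_{k_1} - (1-6\epsilon)s_{k_1} \;=\; 7\epsilon\, s_{k_1}.\]
Similarly, using $|W| \leq (1+\epsilon)\hat s$ together with~\eqref{shatbounds} (which gives $\hat s \leq (1+2\epsilon)s_{k_1}$), we get
\[|W \setminus C_i| \;\leq\; (1+\epsilon)(1+2\epsilon)s_{k_1} - (1-6\epsilon)s_{k_1} \;\leq\; 10\epsilon\, s_{k_1}\]
for $\epsilon$ sufficiently small. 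These two displayed bounds are the only structural consequences of the hypotheses I will use.

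For part~(\ref{recoverclustera:agnostic}), I would write $N_{\hat G}(u) \cap W \supseteq (N_{\hat G}(u) \cap C_i) \setminus (C_i \setminus W)$, so
\[|N_{\hat G}(u) \cap W| \;\geq\; |N_{\hat G}(u) \cap C_i| - |C_i \setminus W| \;\geq\; (p-\epsilon)s_i - 7\epsilon s_{k_1} \;\geq\; (p-8\epsilon) s_{k_1},\]
since $s_i \geq s_{k_1}$. It then remains to note that $(p-10\epsilon)\hat s \leq (p-10\epsilon)(1+2\epsilon)s_{k_1} \leq (p-8\epsilon)s_{k_1}$, where the last inequality reduces (after expanding) to $2\epsilon p \leq 2\epsilon + 20\epsilon^2$ and hence holds because $p \leq 1$. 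For part~(\ref{recoverclusterb:agnostic}), I would write $N_{\hat G}(u) \cap W \subseteq (N_{\hat G}(u) \cap C_i) \cup (W \setminus C_i)$, so
\[|N_{\hat G}(u) \cap W| \;\leq\; (q+\epsilon)s_i + 10\epsilon s_{k_1},\]
and then absorb both right-hand terms into $(q+10\epsilon)\hat s$ using $s_i \leq \hat s$ and $s_{k_1} \leq \hat s / (1+2\epsilon)^{-1}$ (so that $10\epsilon s_{k_1}$ contributes a factor slightly under $10\epsilon\hat s$ and combines with the spare $\epsilon\hat s$ from the first term); the computation boils down to checking $(1+\epsilon)(1+2\epsilon)^{-1}\cdot 10 + 1 \leq 10 + $ something, which, as in part~(\ref{recoverclustera:agnostic}), follows for small enough $\epsilon$.

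There is no real conceptual obstacle here; the argument is entirely bookkeeping once one has the symmetric-difference bounds in hand. The one place where care is needed is tracking the constants: the slack of $10\epsilon$ in the conclusion is essentially tight against the $7\epsilon s_{k_1} + 10\epsilon s_{k_1}$ symmetric difference plus the multiplicative $(1+2\epsilon)$ loss from replacing $s_{k_1}$ with $\hat s$. So my main task is simply to choose $\epsilon$ small enough (e.g., $\epsilon \leq 1/10$) that all the quadratic-in-$\epsilon$ error terms are absorbed and the $p,q\leq 1$ assumption lets the $(1+2\epsilon)$ multiplicative slack on $\hat s$ work in our favor in both directions.
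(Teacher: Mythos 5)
Your overall plan matches the paper's proof exactly, and part~(a) is correct, but part~(b) contains a genuine quantitative gap. The issue is the round trip $\hat s \to s_{k_1} \to \hat s$. You convert $|W| \le (1+\epsilon)\hat s$ into $(1+\epsilon)(1+2\epsilon)s_{k_1}$, obtaining $|W\setminus C_i| \le (9\epsilon + 2\epsilon^2)s_{k_1}$, which you round up to $10\epsilon s_{k_1}$. Converting back with $s_{k_1}\le\hat s$ then gives at best $|W\setminus C_i| \le (9\epsilon + 2\epsilon^2)\hat s$, so the final bound is $(q+\epsilon)\hat s + (9\epsilon+2\epsilon^2)\hat s = (q + 10\epsilon + 2\epsilon^2)\hat s$, or with your rounding $(q + 11\epsilon)\hat s$. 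Both strictly exceed the required $(q + 10\epsilon)\hat s$, so the argument as written does not close. Your invocation of ``$s_{k_1} \le \hat s/(1+2\epsilon)^{-1}$'' is a direction error — that is just $s_{k_1} \le (1+2\epsilon)\hat s$, a weaker upper bound, which only widens the gap — and the sanity check you state, $(1+\epsilon)(1+2\epsilon)^{-1}\cdot 10 + 1 \le 10$, is in fact false: its left side is $\approx 10.9$ for small $\epsilon$.

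The fix, which is what the paper does, is to avoid the round trip by never rewriting the $\hat s$-term in $s_{k_1}$. Instead, apply the \emph{lower} bound $s_{k_1} \ge \hat s/(1+2\epsilon)$ from~\eqref{shatbounds} only to the negative term:
\[
|W\setminus C_i| \;\le\; (1+\epsilon)\hat s - (1-6\epsilon)s_{k_1} \;\le\; (1+\epsilon)\hat s - \frac{1-6\epsilon}{1+2\epsilon}\hat s \;=\; \frac{9\epsilon + 2\epsilon^2}{1+2\epsilon}\hat s \;\le\; 9\epsilon\,\hat s,
\]
and then $(q+\epsilon)s_i + 9\epsilon\hat s \le (q+\epsilon)\hat s + 9\epsilon\hat s = (q+10\epsilon)\hat s$ exactly. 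Part~(a) does not suffer the same problem because there the conversion $s_{k_1}\to\hat s$ appears only once and in the favorable direction, as you correctly handled.
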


\begin{proof}
Assume $u \in C_i$ and $u$ satisfies~(\ref{jinCi}).  We want to lower bound $|N(u) \cap W|$ in terms of $|N(u) \cap C_i|$.  The worst case is when as many as possible of $u$'s neighbors in $C_i$ come from $C_i \setminus W$, i.e., when $u$ is adjacent to all vertices in $C_i \setminus W$.  Thus, we have
\[|N(u) \cap W| \geq |N(u) \cap C_i| - |C_i \setminus W| \geq (p - \epsilon)s_i - |C_i \setminus W|.\]
As $|C_i| = s_i$ and $|W \cap C_i| \geq (1 - 6\epsilon)s_{k_1}$, we have 
\[|C_i \setminus W| \leq s_i - (1 - 6\epsilon)s_{k_1} \leq s_i - \frac{1 - 6\epsilon}{1 + \epsilon}s_i \leq 7\epsilon s_i.\]
Therefore,
\[|N(u) \cap W| \geq (p - \epsilon)s_i - 7\epsilon s_i \geq (p - 8\epsilon)s_{k_1}.\]
Therefore, by~\eqref{shatbounds} we have
\[|N(u) \cap W| \geq \frac{p - 8\epsilon}{1 + 2\epsilon}\hat s \geq (p - 10\epsilon)\hat s.\]
This proves part~\ref{recoverclustera:agnostic}).

For part~\ref{recoverclusterb:agnostic}), assume $u \notin C_i$ and $u$ satisfies~(\ref{jnotinCi}).  Now we want to upper bound $|N(u) \cap W|$ in terms of $|N(u) \cap C_i|$.  Now the worst case is when $u$ has as many neighbors as possible in $W \setminus C_i$, i.e., when $u$ is adjacent to all vertices in $W \setminus C_i$.  In this case,
\[|N(u) \cap W| \leq |N(u) \cap C_i| + |W \setminus C_i| \leq (q + \epsilon)s_i + |W \setminus C_i|.\]
As $|W| \leq (1 + \epsilon)\hat s$ and $|W \cap C_i| \geq (1 - 6\epsilon)s_{k_1}$, we have 
\[|W \setminus C_i| \leq (1 + \epsilon)\hat s - (1 - 6\epsilon)s_{k_1} \leq (1 + \epsilon)\hat s - \frac{1 - 6\epsilon}{1 + 2\epsilon}\hat s \leq 9\epsilon\hat s.\]
Therefore,
\[|N(u) \cap W| \leq (q + \epsilon)s_i + 9\epsilon\hat s \leq (q + 10\epsilon)\hat s.\]
This completes the proof of~\ref{recoverclusterb:agnostic}).
\end{proof}

Thus, if we have a set $W$ which has large intersection with $C_i$, we can use $|N(u) \cap W|$ to distinguish between $u \in C_i$ and $u \notin C_i$ as shown in Figure~\ref{fig:recovercluster}, provided 
\[p - 10\epsilon > q + 10\epsilon,\]
or, equivalently,
\begin{equation*}\label{epsilonpqrel}
\epsilon < \frac{p - q}{20}.
\end{equation*}

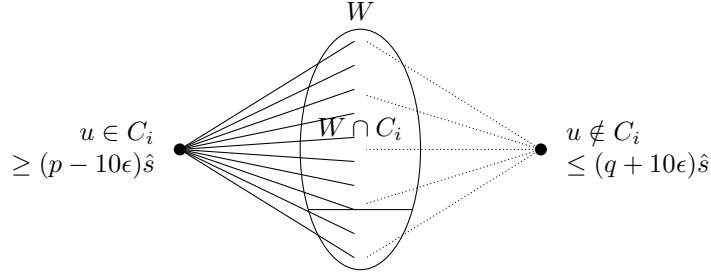
\begin{figure}
\centering
\begin{tikzpicture}[scale=.08]
\draw (0, 0) ellipse(10 and 20);
\draw (-8.66, -10) -- (8.66, -10);
\draw (0, 0) node[anchor=south]{$W \cap C_i$};
\draw (0, 20) node[anchor=south]{$W$};
\foreach \x in {-30, 30}
\fill (\x, 0) circle(1);
\draw (-30, 0) node[anchor=east]{\begin{tabular}{r}$u \in C_i$ \\ $\geq (p - 10\epsilon)\hat s$\end{tabular}};
\draw (30, 0) node[anchor=west]{\begin{tabular}{l}$u \notin C_i$ \\ $\leq (q + 10\epsilon)\hat s$\end{tabular}};
\foreach \y in {-18, -14, ..., 18}
\draw (-30, 0) -- (-1, \y);
\foreach \y in {-18, -9, 0, 9, 18}
\draw[densely dotted] (30, 0) -- (1, \y);
\end{tikzpicture}
\caption{If $W$ has large overlap with $C_i$, then a.s.\ vertices in $C_i$ will have many neighbors in $W$, while vertices not in $C_i$ will have relatively few neighbors in $W$.}\label{fig:recovercluster}
\end{figure}

Note that we apply Lemma~\ref{recoverexactly} to the $W_u$, which themselves depend on the random sample $\hat G$, so we cannot simply treat $|N(u) \cap W|$ as the sum of $|W|$ independent random variables and follow a Hoeffding argument as in Lemma~\ref{recoverclusterswhp}.  This is why we need both Lemmas~\ref{recoverclusterswhp} and~\ref{recoverexactly}.

\section{The ``delete and recurse'' step}\label{delandrec}

After we have found one cluster, we cannot simply say that Algorithm~\ref{nonunifalg} finds the remaining clusters by the same argument.  Some care has to be taken because the iterations of Algorithm~\ref{nonunifalg} cannot be handled independently: the event that iteration $t$ correctly recovers a cluster certainly depends on whether or not iterations $1, \ldots, t - 1$ correctly recovered clusters.  

We can get around this by ``preprocessing the randomness'' as in~\cite[Section~7.3]{ColeFR17}.  Essentially, we apply the analysis in Sections~\ref{BBhateigs}-\ref{recsingle} to all $2^k$ \emph{cluster submatrices} of $\hat B$ (principle submatrices induced by a subset of the clusters) and show via a union bound that the overall failure probability is still small.

Formally, we define the $2^k$ cluster submatrices as
\[\hat B_J := \hat B\left[\bigcup_{i \in J}C_i\right]\]
for $J \subseteq [k]$.  We define cluster submatrices of $B$ analogously  Next, we define the following events on $\mathcal G(n, \CC, p, q)$:
\begin{itemize}
\item	\emph{Spectral events}: for $J \subseteq [k]$, let $E_J$ be the event that $||B_J - \hat B_J||_2 \leq 7\sqrt{\matrdim(B_J)} = 7\sqrt{\sum_{i \in J}s_i}$.  These are the events that the eigenvalues of the cluster submatrices are close to their expectations.
\item	\emph{Degree events}: for $1 \leq i \leq k, 1 \leq u \leq n$, let $D_{i, u}$ be the event that $|N_{\hat G}(u) \cap C_i| \geq (p - \epsilon)s_i$ if $u \in C_i$, or the event that $|N_{\hat G}(u) \cap C_i| \leq (q + \epsilon)s_i$ if $u \notin C_i$.  These are the events that each vertex has approximately the expected number of neighbors in each cluster.
\end{itemize}
Thus, we have defined a total of $2^k + nk$ events.  Essentially, these are the events that every $\hat B_J$ satisfies~\eqref{B-Bhatnorm} and that~(\ref{jinCi}) and~(\ref{jnotinCi}) are satisfied for all $i \in [k], u \in [n]$.  Note that the events are well-defined, as their definitions depend only on the underlying probability space $\mathcal G(n, \mathcal C, p, q)$ and not on the random graph $\hat G$ sampled from the space.

These final two lemmas prove that Algorithm~\ref{nonunifalg} succeeds with probability at least
\[1 - \left(\frac2e\right)^{s_k} + \frac{n^{3 / 2}}{e^{\epsilon^2s_k}} = 1 - 2^{-\Omega(\sqrt n)}:\]

\begin{lemma}
Assume $E_J$ and $D_{i, u}$ hold for all $J \subseteq [k]$, $1 \leq i \leq k$, and $1 \leq u \leq n$.  Then Algorithm~\ref{nonunifalg} successfully recovers $\CC$.
\end{lemma}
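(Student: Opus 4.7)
The plan is to induct on the iteration count. At iteration $t$, let $J_t \subseteq [k]$ denote the indices of clusters that have \emph{not yet} been removed, with $J_1 = [k]$. The inductive hypothesis is that iterations $1, \ldots, t-1$ have correctly identified and removed clusters $C_{i_1}, \ldots, C_{i_{t-1}}$, so that the graph passed to iteration $t$ is exactly $\hat G\bigl[\bigcup_{i \in J_t} C_i\bigr]$ and its adjusted matrix is exactly the cluster submatrix $\hat B_{J_t}$. Given the inductive hypothesis, I need to show that iteration $t$ correctly outputs some cluster $C_i$ with $i \in J_t$.

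First I would verify that the entire analysis of Sections~\ref{BBhateigs}--\ref{recsingle} transfers verbatim from $\hat B$ to $\hat B_{J_t}$. The restriction $\CC \cap J_t$ inherits a supercluster partition $\CC_1 \cap J_t, \ldots, \CC_K \cap J_t$ (dropping empties); because the algorithm always strips clusters from the ``current first'' nonempty supercluster, the first nonempty piece corresponds to some original $\CC_j$, and clusters in it still satisfy~(\ref{withinsupercluster}) while being separated by $\geq c\sqrt n$ from clusters in later superclusters by~(\ref{superclustersep}). The bookkeeping in Step~7 of Algorithm~\ref{nonunifalg} ensures the projection dimension used is exactly $k_1^{(t)} := |\CC_j \cap J_t|$, matching the ``$k_1$'' in Lemmas~\ref{eigsep}--\ref{existsgoodB}. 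The assumed spectral event $E_{J_t}$ supplies the bound $\|B_{J_t} - \hat B_{J_t}\|_2 \leq 7\sqrt{\matrdim(B_{J_t})} \leq 7\sqrt n$, which is exactly what those lemmas require in place of~(\ref{B-Bhatnorm}).

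With that transfer done, the rest is routine application. Lemma~\ref{existsgoodB} (applied to $\hat B_{J_t}$) guarantees the existence of a column $v$ whose set $W_v$ satisfies $|W_v| \leq (1 + \epsilon)\hat s$ and $\|P_{k_1^{(t)}}(\hat B_{J_t})\1_{W_v}\|_2 \geq (1-3\epsilon)\sqrt{s_{k_1^{(t)}}}$, so the argmax $v^*$ in Step~\ref{step:jstar:nonunif} succeeds. Lemma~\ref{largenorm=>goodB} then forces $|W_{v^*} \cap C_i| \geq (1-6\epsilon)s_{k_1^{(t)}}$ for some $C_i \in \CC_j \cap J_t$, i.e., $W_{v^*}$ is an approximate cluster. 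Finally, the degree events $D_{i',u}$ (for every remaining $i'$ and every $u$) are precisely the hypotheses of Lemma~\ref{recoverexactly}, so Step~\ref{step:id:nonunif} outputs exactly $C_i$. This closes the induction, and since one cluster is removed per iteration, the algorithm terminates after $k$ iterations having recovered $\CC$.

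The main obstacle is not any single estimate but the care needed in the first step: one must check that the deletion step preserves the structural hypotheses of Section~\ref{superclusters} on the induced submatrix, and in particular that the ``gap'' between the first and second superclusters of $J_t$ is still $\geq c\sqrt n$ (rather than some smaller quantity depending on the current number of remaining vertices). Because~(\ref{superclustersep}) is a statement about the original sizes $s_i$ and $c\sqrt n$ is measured with respect to the original $n$, while $E_{J_t}$ gives a spectral deviation bounded by $7\sqrt{\matrdim(B_{J_t})} \leq 7\sqrt n$, the same eigenvalue separation argument as in Lemma~\ref{eigsep} goes through with room to spare, and Lemma~\ref{lemma:projl2norm} still yields the $\epsilon$-bound on projector deviation under the same condition~(\ref{cepsilonrel}).
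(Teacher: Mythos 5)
Your proof is correct and follows exactly the ``preprocess the randomness'' strategy the paper itself outlines and defers to~\cite[Section~7.3]{ColeFR17}: induct on iterations, use the spectral event $E_{J_t}$ to supply the bound $\|B_{J_t} - \hat B_{J_t}\|_2 \leq 7\sqrt n$ in place of~\eqref{B-Bhatnorm}, observe that the size-gap and near-uniformity conditions~\eqref{superclustersep} and~\eqref{withinsupercluster} are inherited by the remaining clusters, and invoke the degree events $D_{i,u}$ in Lemma~\ref{recoverexactly} to clean up the approximate cluster. You correctly isolate the one place care is needed---that the $c\sqrt n$ gap is measured against the original $n$ while the spectral deviation only shrinks as clusters are removed, so Lemma~\ref{lemma:projl2norm} still applies under~\eqref{cepsilonrel}---which is the point the paper's brief remark glosses over.
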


\begin{lemma}
$\displaystyle \prb\left[\bigcup_{J \subseteq [k]}\bar E_J \cup \bigcup_{i \in [k], j \in [n]} \bar D_{i, u}\right] \leq \left(\frac2e\right)^{s_k} + \frac{n^{3 / 2}}{e^{\epsilon^2s_k}} = 2^{-\Omega(\sqrt n)}$.
\end{lemma}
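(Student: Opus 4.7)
The plan is a straightforward union bound, split according to the two event types. For the spectral events, I would apply the F\"uredi--Koml\'os-style bound cited in the proof of Lemma~\ref{lemma:B-Bhatnorm} to each nonempty $J \subseteq [k]$: the submatrix $\hat B_J - B_J$ is itself a symmetric random matrix with independent above-diagonal entries of magnitude $\leq 1$ and variance $\leq 1/4$, so the theorem yields $\prb[\bar E_J] \leq e^{-\matrdim(B_J)} = e^{-\sum_{i \in J} s_i} \leq e^{-s_k}$, the last inequality because $s_i \geq s_k$ for every $i$. (The event $E_\emptyset$ is vacuous.) Summing over the $\leq 2^k$ subsets gives $\sum_J \prb[\bar E_J] \leq 2^k e^{-s_k}$. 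To match the claimed $(2/e)^{s_k}$ form, I would use the combinatorial bound $k \leq n/s_k \leq \sqrt n / c \leq s_k$, whose last step holds whenever $c \geq 1$---consistent with the hypothesis $c = \Omega((p - q)^{-2})$, since we may absorb any constant factor into the hidden $\Omega$. Therefore $2^k e^{-s_k} \leq 2^{s_k} e^{-s_k} = (2/e)^{s_k}$.

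For the degree events, Lemma~\ref{recoverclusterswhp} directly supplies $\prb[\bar D_{i,u}] \leq e^{-\epsilon^2 s_i} \leq e^{-\epsilon^2 s_k}$ for every $(i, u) \in [k] \times [n]$. A union bound over the at most $nk \leq n^{3/2}/c$ such pairs gives $\sum_{i,u} \prb[\bar D_{i,u}] \leq n^{3/2} e^{-\epsilon^2 s_k}$. Adding the two contributions produces the claimed inequality, and the final $2^{-\Omega(\sqrt n)}$ form follows because $s_k \geq c\sqrt n$ and the polynomial prefactor is absorbed via $\log(n^{3/2}) = o(\sqrt n)$.

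The main (mild) obstacle is verifying that the F\"uredi--Koml\'os hypotheses transfer cleanly to each submatrix $\hat B_J - B_J$. This should be routine, since restricting the rows and columns of a symmetric random matrix with independent, centered, bounded above-diagonal entries preserves all three properties, allowing the same theorem to be applied with $n$ replaced by $\matrdim(B_J)$. Beyond this, no new concentration inequality is needed: the whole argument is driven by Lemma~\ref{lemma:B-Bhatnorm}, Lemma~\ref{recoverclusterswhp}, and the combinatorial inequality $k \leq s_k$.
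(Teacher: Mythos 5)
Your proof is correct and takes essentially the same route as the paper: the paper omits the argument by deferring to the union-bound proof in \cite[Section 7.3]{ColeFR17}, where exactly this decomposition is used---apply the F\"uredi--Koml\'os-type bound (Theorem 7 of \cite{ColeFR17}) to each $\hat B_J - B_J$ to get $\prb[\bar E_J]\le e^{-\matrdim(B_J)}\le e^{-s_k}$ for $J\neq\emptyset$, apply Lemma~\ref{recoverclusterswhp} to each $\bar D_{i,u}$, and union-bound using $k\le n/s_k\le s_k$ (for $c\ge1$) and $nk\le n^{3/2}$. Your check that the hypotheses of the concentration bound transfer to principal submatrices, and the resulting $2^{-\Omega(\sqrt n)}$ estimate via $s_k\ge c\sqrt n$, are exactly the intended steps.
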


We omit the proofs, as they are essentially the same as the proof of the main theorem in~\cite[Section~7.3]{ColeFR17}.  The main difference is that one argues that Algorithm~\ref{nonunifalg} first recovers $\CC_1$, then $\CC_2$, etc.  Thus, we really only have to take the union bound over $2^{k_1} + \ldots + 2^{k_K}$ cluster submatrices, not all $2^k$.


\section{Parameter-free planted partition}\label{paramfree}

Until this point, we have assumed that our algorithm has access to $p$, $q$, and $k_1, \ldots, k_K$, but not $s_1, \ldots, s_k$.  In this section, we discuss what to do when we don't have access to these parameters' exact values.

\subsection{Unknown supercluster sizes}\label{unknownki}

Let us assume that \emph{only} $p$ and $q$ are known.  As it turns out, we can reduce this case to the case when $k_1, \ldots, k_K$ are known at the expense of slightly increasing $c$.

Assume~\eqref{B-Bhatnorm} holds.  Then by~\eqref{superclustersep}
\begin{equation}\label{superclustersepBhat}
\begin{array}{l}
C_i, C_{i + 1} \textrm{ in different}	\\
\textrm{superclusters}
\end{array}
\Rightarrow 
\lambda_i(\hat B) \geq \lambda_{i + 1}(\hat B) + ((p - q)c - 14)\sqrt n.
\end{equation}
Thus, let us go down the list of eigenvalues of $\hat B$ in decreasing order and whenever we see a separation of at least $((p - q)c - 14)\sqrt n$ record the number of eigenvalues seen since the last such separation.  Ignore the last group, as these eigenvalues correspond to the $n - k$ zero eigenvalues of $B$.  Let $\hat k_1, \ldots, \hat k_L$ be this sequence of numbers.  

Consider the clustering
\begin{eqnarray*}
\hat\CC_1 	& :=	& \{C_1, \ldots, C_{\hat k_1}\}, \\
\hat\CC_2 	& :=	& \{C_{\hat k_1 + 1}, \ldots, C_{\hat k_1 + \hat k_2}\}, 	\\
			& \vdots	\\
\hat\CC_L 	& :=	& \{C_{\hat k_1 + \ldots + \hat k_{L - 1} + 1}, \ldots, C_{\hat k_1 + \ldots + \hat k_L}\}.
\end{eqnarray*}

\begin{obs}\label{refinement}
If~\eqref{B-Bhatnorm} holds, then $\hat\CC_1, \ldots, \hat\CC_L$ is a refinement of $\CC_1, \ldots, \CC_K$ (as partitions of $\CC$).
\end{obs}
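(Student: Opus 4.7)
The plan is to show that every boundary between consecutive superclusters in the true partition $\CC_1, \ldots, \CC_K$ is also a boundary detected by the scanning procedure on the spectrum of $\hat B$. Once this is established, the sequence $\hat\CC_1, \ldots, \hat\CC_L$ automatically refines $\CC_1, \ldots, \CC_K$, since its block boundaries form a superset of the true block boundaries.

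First I would recall the exact spectrum of $B$ from Lemma~\ref{Beigs}: $\lambda_i(B) = (p-q)s_i$ for $i \leq k$ and $\lambda_i(B) = 0$ for $i > k$. Combining this with Weyl's inequality~\eqref{weyl} (which is available because we are assuming~\eqref{B-Bhatnorm}) gives, for $i \leq k$,
\[(p-q)s_i - 7\sqrt n \;\leq\; \lambda_i(\hat B) \;\leq\; (p-q)s_i + 7\sqrt n,\]
and $|\lambda_i(\hat B)| \leq 7\sqrt n$ for $i > k$. Then, at every index $i$ where $C_i$ and $C_{i+1}$ sit in different superclusters, the supercluster separation~\eqref{superclustersep} forces $s_i - s_{i+1} \geq c\sqrt n$, and subtracting the two displays above yields
\[\lambda_i(\hat B) - \lambda_{i+1}(\hat B) \;\geq\; (p-q)(s_i - s_{i+1}) - 14\sqrt n \;\geq\; \bigl((p-q)c - 14\bigr)\sqrt n,\]
which is precisely~\eqref{superclustersepBhat}: a gap large enough to be flagged by the procedure. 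The same computation applies to the transition from the $k$th to the $(k+1)$st eigenvalue, using $s_k \geq c\sqrt n$ from~\eqref{siassumption} and $\lambda_{k+1}(B) = 0$; this guarantees that the "final group" of near-zero eigenvalues is correctly separated off and discarded.

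Thus every true supercluster boundary, and the $k/(k+1)$ boundary, corresponds to a recorded gap in $\hat\CC_1, \ldots, \hat\CC_L$, so the boundaries of $\hat\CC$ contain those of $\CC$ as a subset. Equivalently, each $\hat\CC_j$ lies entirely inside one $\CC_i$, which is the definition of a refinement. There is no real obstacle here; the one subtlety worth noting is that the observation does not assert equality — within a single $\CC_i$, the sizes may differ by up to $\epsilon\,s_{k_1}$ by~\eqref{withinsupercluster}, so the procedure is free to introduce additional spurious gaps, and the claim only excludes the opposite failure of \emph{merging} two genuine superclusters.
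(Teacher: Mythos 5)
Your proposal is correct and takes essentially the same route as the paper: both hinge on~\eqref{superclustersepBhat}, which you re-derive from Lemma~\ref{Beigs}, Weyl's inequality~\eqref{weyl}, and~\eqref{superclustersep}, while the paper cites it directly. The only cosmetic difference is that you argue forward ("every true boundary yields a detected gap, so the detected boundaries contain the true ones") whereas the paper applies the contrapositive ("no detected gap within $\hat\CC_j$ implies consecutive clusters lie in the same $\CC_i$"); these are logically interchangeable, and you additionally spell out why the trailing block of near-zero eigenvalues is cleanly separated off, a point the paper leaves implicit.
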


\begin{proof}
Consider $\hat\CC_1$.  By definition of $\hat k_1$, 
\[\lambda_i(\hat B) < \lambda_{i + 1}(\hat B) - ((p - q)c - 14)\sqrt n\]
for $i = 1, \ldots, \hat k_1 - 1$.  By the contrapositive of~\eqref{superclustersepBhat}, this means $C_i, C_{i + 1}$ are in the same $\CC_j$ for $i = 1, \ldots, \hat k_1 - 1$.  A similar argument applies for each $\hat\CC_i$.  Hence, each $\hat\CC_i$ is a subset of some $\CC_j$. 
\end{proof}

\begin{obs}\label{refinementsep}
Assume~\eqref{B-Bhatnorm} holds.  Then $s_i \geq s_{i + 1} + \left(c - \frac{28}{p - q}\right)\sqrt n$ whenever $C_i, C_{i + 1}$ are in different $\hat\CC_j$.  Equivalently,
\[\min_{C \in \hat\CC_j}|C| \geq \max_{C \in \hat\CC_{j + 1}}|C| + \left(c - \frac{28}{p - q}\right)\sqrt n\]
for $i = 1, \ldots, L - 1$.
\end{obs}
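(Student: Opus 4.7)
The plan is to unwind the definition of $\hat\CC_j$ into an eigenvalue gap of $\hat B$, convert that to an eigenvalue gap of $B$ via Weyl, and finally translate it into a cluster-size gap via Lemma~\ref{Beigs}.

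First, suppose $C_i$ and $C_{i+1}$ lie in different $\hat\CC_j$. By construction, the block boundaries in the sequence $\hat k_1, \ldots, \hat k_L$ occur precisely at those indices $i$ for which $\lambda_i(\hat B) - \lambda_{i+1}(\hat B) \geq ((p-q)c - 14)\sqrt n$; otherwise we would not have cut a new group between $\lambda_i(\hat B)$ and $\lambda_{i+1}(\hat B)$. So the first step is simply to record this gap: $\lambda_i(\hat B) \geq \lambda_{i+1}(\hat B) + ((p-q)c - 14)\sqrt n$.

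Next, under the assumption that~\eqref{B-Bhatnorm} holds, Weyl's inequalities (equation~\eqref{weyl}) give $|\lambda_j(B) - \lambda_j(\hat B)| \leq 7\sqrt n$ for every $j$. Applying this in both directions yields $\lambda_i(B) \geq \lambda_i(\hat B) - 7\sqrt n$ and $\lambda_{i+1}(\hat B) \geq \lambda_{i+1}(B) - 7\sqrt n$. Substituting into the gap above gives
\[\lambda_i(B) \geq \lambda_{i+1}(B) + ((p-q)c - 28)\sqrt n.\]
Since the separation between $\hat\CC_j$ and $\hat\CC_{j+1}$ occurs at some $i < k$ (the very last block of eigenvalues is the zero block, which we discarded), Lemma~\ref{Beigs} applies to both sides: $\lambda_i(B) = (p-q)s_i$ and $\lambda_{i+1}(B) = (p-q)s_{i+1}$. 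Dividing through by $p-q > 0$ yields exactly $s_i \geq s_{i+1} + (c - 28/(p-q))\sqrt n$, which is the pointwise statement. The equivalent statement about $\min_{C \in \hat\CC_j}|C|$ versus $\max_{C \in \hat\CC_{j+1}}|C|$ then follows because the clusters have already been indexed in nonincreasing order of size by~\eqref{siassumption}, so the minimum of $\hat\CC_j$ is attained at the last index in that block and the maximum of $\hat\CC_{j+1}$ at the first.

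There is no real obstacle here; the argument is a three-line chain of inequalities and the only subtlety is bookkeeping. The one thing to be careful about is ensuring that Lemma~\ref{Beigs} actually applies on \emph{both} sides of the gap, i.e.\ that the separation in question is not the one between genuine cluster eigenvalues and the zero tail of $B$. This is handled by the instruction in the definition to ``ignore the last group,'' so every boundary we use indeed lies strictly below $k$, and Lemma~\ref{Beigs}'s formula $\lambda_j(B) = (p-q)s_j$ is in force for both $j = i$ and $j = i+1$.
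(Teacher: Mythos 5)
Your proof is correct. The paper states Observation~\ref{refinementsep} without proof, and your argument supplies exactly the natural one: the construction of the $\hat k_j$ gives a gap of at least $((p-q)c-14)\sqrt n$ between $\lambda_i(\hat B)$ and $\lambda_{i+1}(\hat B)$ at each boundary; applying Weyl's inequality~\eqref{weyl} on both sides converts this to a gap of at least $((p-q)c-28)\sqrt n$ between $\lambda_i(B)$ and $\lambda_{i+1}(B)$; and Lemma~\ref{Beigs} then translates that to $s_i \geq s_{i+1} + (c - 28/(p-q))\sqrt n$. This is precisely the reverse of the calculation the paper performs when deriving~\eqref{superclustersepBhat}, so it is in the intended spirit. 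You also correctly flag the one subtle point: Lemma~\ref{Beigs} must apply at both indices $i$ and $i+1$, which requires $i+1 \leq k$; this holds because the discarded final group is exactly $\{k+1,\ldots,n\}$ (the zero eigenvalues of $B$ are all within $7\sqrt n$ of zero, so no spurious cut of size $((p-q)c-14)\sqrt n$ can occur among them once $c > 28/(p-q)$, a bound well within the paper's eventual $c \geq 15/((p-q)\epsilon)$ with $\epsilon \leq .01$), while the cut between $\lambda_k(\hat B)$ and $\lambda_{k+1}(\hat B)$ is indeed detected.
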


Thus, by these two observations, we a.s.\ have a separation into superclusters $\hat\CC_1, \ldots, \hat\CC_L$ such that
\begin{itemize}
\item	$s_i \geq s_{i + 1} + c'\sqrt n$ whenever $C_i, C_{i + 1}$ are in different $\hat\CC_j$, where
\begin{equation}\label{c'def}
c' := c - \frac{28}{p - q}
\end{equation}
(by Observation~\ref{refinementsep}).
\item	$s_i \leq (1 + \epsilon)s_{i + 1}$ whenever $C_i, C_{i + 1}$ are in the same $\hat\CC_j$ (by Observation~\ref{refinement} and~\eqref{withinsupercluster}).
\end{itemize}
This is sufficient to recover $\CC$, since we only care about recovering the individual clusters, not which supercluster each cluster belongs to.

Note that since $c = \Omega((p - q)^{-2})$, $c'$ is still a large positive constant.  Hence, in order for the analysis in Sections~\ref{BBhateigs}-\ref{delandrec} to go through, we simply require
\[c \geq \frac{14}{(p - q)\epsilon} + \frac{28}{p - q} = \frac{14}{(p - q)\epsilon}(1 + 2\epsilon).\]
in place of~\eqref{cepsilonrel}.  As $\epsilon \leq .01$, setting $c \geq \frac{15}{(p - q)\epsilon}$ suffices.

\subsection{Unknown $p$ and $q$}\label{unknownpq}

Unfortunately, we still need to assume that Algorithm~\ref{nonunifalg} has access to the exact values of $p$ and $q$ so that it can compute $\hat B := \hat A + pI_n - qJ_n$.  There are two possible ways around this:
\begin{enumerate}
\item	Obtain good estimates on the eigenvalues of $A$ so that the algorithm can use $\hat A$ instead of $\hat B$, as in the uniform case.  Since
\begin{equation}\label{rank1pert}
A = B + qJ_n,
\end{equation}
Weyl's inequalities give the bounds
\[(p - q)s_i \leq \lambda_i(A) \leq (p - q)s_{i - 1}\]
for $i = 2, \ldots, k$.  However, in order for the spectral results in Sections~\ref{BBhateigs} and~\ref{projdev} to go through we need a separation between $\lambda_i(A)$ and $\lambda_{i - 1}(A)$ when $C_i$ and $C_{i - 1}$ are in different superclusters.  Thus, the above bounds are not good enough.  However, by~\eqref{rank1pert} we may view $A$ as a rank-1 perturbation of $B$ (whose eigenvalues are known), so we may attempt to use perturbation results such as~\cite{Ding2007, gu1994stable, Mehl2011} to compute its eigenvalues.
\item	Estimate $p$ and $q$ empirically.  We must find a way to do so with $\leq O(1 / \sqrt n)$ error in order to overcome the $O(\sqrt n)$ error introduced by the random noise (see Lemma~\ref{lemma:B-Bhatnorm}).    One promising approach is to use the techniques of \emph{graphons}~\cite{lovasz2012large}.  In~\cite{kenyon2017multipodal, kenyon2017phases}, the authors use the theory of large deviations~\cite{chatterjee2016introduction} to show that the edge and triangle or $l$-star densities of ``most'' graphs together induce a multipodal (i.e.\ stochastic block model) structure in the limiting graphon.  Thus, one might hope to estimate $p$, $q$, and $s_1, \ldots, s_k$ by looking at the graphon induced by these statistics on $\hat G$.
\end{enumerate}.

\section{Planted partitions in random symmetric matrices}\label{gen}

We now attempt to push Algorithm~\ref{nonunifalg} to an even more general setting.  In the above sections we receive as input a random graph or, equivalently, a random symmetric matrix $\hat A = (\hat a_{uv})_{u, v = 1}^n$ whose diagonal entries are 0 and whose off-diagonal entries are Bernoulli random variables with expectation $p$ or $q$.  More generally, we can assume that $\hat A$ is a random symmetric matrix whose entries come from arbitrary distributions (under certain assumptions) with expectations $p$ and $q$.

\begin{defn}[Planted partition model for random symmetric matrices]
Let $\mathcal C = \{C_1, \ldots, C_k\}$ be a partition of the set $[n]$ into $k$ clusters.  For distributions $D_1, D_2, D_3$ on $\R$, we define the \emph{planted partition model} $\mathcal{PP}(n, \mathcal C, D_1, D_2, D_3)$ to be the probability space of real symmetric $n \times n$ matrices  $\hat A = (\hat a_{uv})_{u, v = 1}^n$, where $\hat a_{uv}$ are distributed independently for $1 \leq u \leq v \leq n$ such that
\[\hat a_{uv} \sim \left\{
\begin{array}{ll}
D_1	& \textrm{if $u \neq v$ and $u, v$ in the same cluster}	\\
D_2	& \textrm{if $u \neq v$ and $u, v$ in different clusters}	\\
D_3	& \textrm{if $u = v$}
\end{array}
\right..\]
Furthermore, we assume the following:
\begin{enumerate}
\item	$\expval[D_1] = p, \expval[D_2] = q$, and $\expval[D_3] = 0$, where $0 \leq q < p$.
\item	$\var[\hat a_{uv}] \leq \sigma^2$ for all $u, v$.
\item	$|\hat a_{uv} - \expval[\hat a_{uv}]| \leq \kappa$ for all $u, v$.  I.e., the support of each random variable $\hat a_{uv}$ is contained in an interval of length $2\kappa$ centered at its mean.
\end{enumerate}
\end{defn}

\begin{prob}[Planted partition in a random symmetric matrix]
Identify (or ``recover'') the unknown partition $C_1, \ldots, C_k$ (up to a permutation of $[k]$) given only a random matrix $\hat A \sim \mathcal{PP}(n, \mathcal C, D_1, D_2, D_3)$.
\end{prob}

We will assume that $\CC$ satisfies the superclusters assumptions of Section~\ref{superclusters}, with the following changes:
\begin{itemize}
\item 	We replace~\eqref{siassumption} and~\eqref{superclustersep} with 
\[s_1 \geq \ldots \geq s_k \geq \Delta\]
and
\begin{equation*}\label{superclustersepgen}
\min_{C \in \CC_i} |C| \geq \max_{C \in \CC_{i + 1}}|C| + \Delta,
\end{equation*}
respectively. 
\item 	We now allow the parameters $p, q, \sigma, \kappa, \Delta$, and $\epsilon$ to depend on $n$.  
\item	We assume without loss of generality that $p > q$, but we no longer require that $0 \leq q < p \leq 1$.  
\end{itemize}
Our goal is thus to give conditions on the various parameters that are sufficient for Algorithm~\ref{nonunifalg} to succeed a.s.\ (Section~\ref{paramdep}), noting that the algorithm now receives a random matrix $\hat A \sim \PP(n, \CC, D_1, D_2, D_3)$ as input instead of a random graph $\hat G \sim \mathcal G(n, \CC, p, q)$.  

We now indicate briefly the changes to the analysis in Sections~\ref{BBhateigs}-\ref{delandrec} necessary to make Algorithm~\ref{nonunifalg} work in this more general setting, leaving the details as an exercise.

\subsection{Spectral results}

We can define $A$, $B$, and $\hat B$ as in Section~\ref{notation}.  The main difference in the spectral results of Section~\ref{BBhateigs} is that we get
\begin{equation}\label{nonbernoullinorm}
||B - \hat B||_2 \leq (2\sigma + 6\kappa)\sqrt n
\end{equation}
a.s.\ in place of Lemma~\ref{B-Bhatnorm}.  This dependence on $\sigma$ and $\kappa$ makes sense intuitively because these parameters control how concentrated the entries of $\hat B$ are about their means: if they are too spread out, we should not expect $B$ and $\hat B$ to be ``close.''  Conversely, if $\sigma, \kappa = o(1)$, then we should expect $\hat B$ to be closer to $B$ than when these parameters are constant.

This yields a separation of $\geq (p - q)\Delta - (4\sigma + 12\kappa)\sqrt n$ between the eigenvalues of $B$ and $\hat B$ corresponding to indices in different superclusters (cf.\ Lemma~\ref{eigsep}).  This, in turn, allows us to bound $P_{k_1}(B) - P_{k_1}(\hat B)$ in norm as in Lemma~\ref{lemma:projl2norm}, provided that
\begin{equation}\label{nonbernoulliepsilondep}
\Delta \geq \frac{(4\sigma + 12\kappa)\sqrt n}{(p - q)\epsilon}.
\end{equation}

\subsection{Constructing an approximate cluster}

In Step~\ref{step:shat} of Algorithm~\ref{nonunifalg}, define
\[\hat s := \frac{\lambda_1(\hat B) + (2\sigma + 6\kappa)\sqrt n}{p - q}.\]
Then~(\ref{shatbounds}) holds as in the Bernoulli case, assuming~(\ref{nonbernoullinorm}) holds.  Thus, Lemmas~\ref{largenorm=>goodB} and~\ref{existsgoodB} remain exactly the same as in the Bernoulli case (except replace~(\ref{B-Bhatnorm}) with~(\ref{nonbernoullinorm})).

\subsection{Recovering the cluster exactly}

In the Bernoulli case, we use the random variables
\[|N_{\hat G}(u) \cap C_i| = \sum_{v \in C_i}\hat a_{uv}\]
to distinguish between $u \in C_i$ and $u \notin C_i$ (Lemmas~\ref{recoverclusterswhp} and~\ref{recoverexactly}).  Thus, for general distributions we define the random variable
\[S_{u, W} := \sum_{v \in W}\hat a_{uv}\]
for $u \in [n]$ and $W \subseteq [n]$.  (Recall that a random variable is actually a measurable function from a probability space to $\R$; hence, $S_{u, W}$ is actually a function of the random matrix $\hat A$.)

Proceeding as in Lemma~\ref{recoverclusterswhp}, for each cluster $C_i$ we get 
\begin{equation}\label{SuCiinCi}
S_{u, C_i} \geq (p - \epsilon)s_i
\end{equation}
for all $u \in C_i$, and
\begin{equation}\label{SuCinotinCi}
S_{u, C_i} \leq (q + \epsilon)s_i
\end{equation}
for all $u \notin C_i$, each with probability $\displaystyle\geq 1 - \exp\left(-\frac{\epsilon^2s_i}{3\kappa^2}\right)$.

We now argue that if $W$ has large intersection with some $C_i$, then we get bounds on $S_{u, W}$ which are not far off from those on $S_{u, C_i}$.  However, since the entries of $\hat A$ need not be 0 or 1, each element of $W \triangle C_i$ can throw off the bounds by as much as
\begin{equation*}
\max_{u, v}|\hat a_{uv}| \leq \mu + \kappa,
\end{equation*}
where
\begin{equation*}\label{mudef}
\mu := \max\{|p|, |q|\}.
\end{equation*}
More precisely:

\begin{lemma}\label{recoverexactly:nonber}
Assume~\eqref{nonbernoullinorm} holds.  Suppose $|W| \leq (1 + \epsilon)\hat s$ and $|W \cap C_i| \geq (1 - 6\epsilon)s_k$ for some $C_i \in \CC_1$.  Then 
\begin{enumerate}[a)]
\item	If $u \in C_i$ and $u$ satisfies~(\ref{SuCiinCi}), then $S_{u, W} \geq (p - (18\mu + 16\kappa + 1)\epsilon)\hat s$.
\item	If $u \in [n] \setminus C_i$ and $u$ satisfies~(\ref{SuCinotinCi}), then $S_{u, W} \leq (q + (16\mu + 16\kappa + 1)\epsilon)\hat s$.
\end{enumerate}
\end{lemma}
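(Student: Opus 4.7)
The plan is to mirror the Bernoulli proof of Lemma~\ref{recoverexactly} with two substitutions: (i) the neighbor count $|N_{\hat G}(u) \cap (C_i \triangle W)|$ is replaced by the corresponding sum of entries $S_{u, C_i \triangle W}$, and (ii) each entry $\hat a_{uv}$ is no longer bounded by $1$ but by $|\expval[\hat a_{uv}]| + \kappa \leq \mu + \kappa$, using the assumption $|\hat a_{uv} - \expval[\hat a_{uv}]| \leq \kappa$ together with $\expval[\hat a_{uv}] \in \{p, q, 0\}$.

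First I would import the size bounds verbatim from the Bernoulli proof: since $C_i \in \CC_1$ we have $s_{k_1} \leq s_i \leq (1+\epsilon)s_{k_1}$ by~\eqref{withinsupercluster}, and~\eqref{shatbounds} still holds (as noted in the general setting), which yields
\[
|C_i \setminus W| \leq 7\epsilon\, s_i \leq 7\epsilon\, \hat s, \qquad |W \setminus C_i| \leq 9\epsilon\, \hat s,
\]
exactly as in Lemma~\ref{recoverexactly}. Hence $|C_i \setminus W| + |W \setminus C_i| \leq 16\epsilon\, \hat s$. I would then decompose
\[
S_{u,W} \;=\; S_{u, C_i} \;-\; S_{u, C_i \setminus W} \;+\; S_{u, W \setminus C_i},
\]
and bound each of $|S_{u, C_i \setminus W}|$ and $|S_{u, W \setminus C_i}|$ by $(\mu+\kappa)$ times the size of the corresponding set, so that the total ``error'' is at most $16(\mu+\kappa)\epsilon\, \hat s$.

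For part~(a), apply~(\ref{SuCiinCi}) together with $s_i \geq s_{k_1} \geq \hat s/(1+2\epsilon) \geq (1-2\epsilon)\hat s$ and $p \leq \mu$ to absorb the $2p\epsilon$ term, giving
\[
(p-\epsilon)s_i \;\geq\; \bigl(p - (2\mu+1)\epsilon\bigr)\hat s.
\]
Combining with the error bound gives the stated $S_{u,W} \geq \bigl(p - (18\mu + 16\kappa + 1)\epsilon\bigr)\hat s$. Part~(b) is entirely analogous: use $s_i \leq s_1 \leq \hat s$ from~\eqref{shatbounds} so that $(q+\epsilon)s_i \leq q\hat s + \epsilon\, \hat s$, and then add the same $16(\mu+\kappa)\epsilon\, \hat s$ error to obtain $S_{u,W} \leq \bigl(q + (16\mu + 16\kappa + 1)\epsilon\bigr)\hat s$.

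There is no genuinely difficult step: the argument is bookkeeping once the uniform entrywise bound $|\hat a_{uv}| \leq \mu + \kappa$ is in hand. The only subtlety is that $p$ and $q$ need not lie in $[0,1]$, so when converting $(p-\epsilon)s_i$ into $\hat s$-units one picks up a $2p\epsilon \leq 2\mu\epsilon$ term that must be folded into the $\mu\epsilon$ budget; this is precisely the reason $\mu$, and not merely $\kappa$, appears in the final constants.
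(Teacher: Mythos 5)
Your proof takes the same route the paper intends (the paper omits this proof, noting only that it parallels Lemma~\ref{recoverexactly}): replace $1$ by the pointwise bound $|\hat a_{uv}| \le \mu + \kappa$ in the symmetric-difference error estimate, reuse the bounds $|C_i \setminus W| \le 7\epsilon\hat s$ and $|W \setminus C_i| \le 9\epsilon\hat s$, and convert $s_i$ to $\hat s$-units. The decomposition $S_{u,W} = S_{u,C_i} - S_{u, C_i \setminus W} + S_{u, W \setminus C_i}$, the error budget $16(\mu+\kappa)\epsilon\hat s$, and the $(p-\epsilon)s_i \ge (p - (2\mu+1)\epsilon)\hat s$ conversion in part~(a) are all correct and account exactly for the stated constants.

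One small internal inconsistency worth flagging: you correctly observe that $p, q$ need not lie in $[0,1]$ and that this is why $\mu$ enters the constants, but your part~(b) step ``$(q+\epsilon)s_i \le q\hat s + \epsilon\hat s$ using $s_i \le \hat s$'' only goes the right way when $q + \epsilon \ge 0$. If $q + \epsilon < 0$, multiplying the inequality $s_i \le \hat s$ by the negative quantity $q + \epsilon$ reverses it; one must instead use the lower bound $s_i \ge \hat s/(1+2\epsilon)$, which (exactly as in your part~(a)) produces an extra $2|q|\epsilon \le 2\mu\epsilon$ term and would change the part~(b) constant to $18\mu + 16\kappa + 1$, matching part~(a). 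The paper's own statement has the asymmetric $16\mu$ in part~(b), which implicitly assumes $q + \epsilon \ge 0$ (consistent with the $0 \le q$ stipulation in the model definition but not with the later remark dropping that constraint). This is a looseness inherited from the paper, not a flaw of your approach, but since you explicitly invoke the possibility $q < 0$ you should either restrict to $q \ge 0$ in part~(b) or accept the slightly weaker constant.
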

\noindent	We omit the proof, as it parallels that of Lemma~\ref{recoverexactly}.

Thus, we are able to recover the cluster a.s.\ as long as
\[p - (18\mu + 16\kappa + 1)\epsilon > q + (16\mu + 16\epsilon + 1)\epsilon,\]
or equivalently,
\begin{equation}\label{genepsilon}
\epsilon < \frac{p - q}{34\mu + 32\kappa + 2}.
\end{equation}
It may be possible to optimize these constants slightly by breaking the proof of Lemma~\ref{recoverexactly:nonber} into cases based on whether $p$ and $q$ are positive or negative, but there will always be a dependence on $p, q$ and $\kappa$.

\subsection{Parameter dependencies}\label{paramdep}

By~\eqref{nonbernoulliepsilondep}, the above inequality~\eqref{genepsilon} is satisfied if
\[\frac{(2\sigma + 6\kappa)\sqrt n}{(p - q)\Delta} \leq \frac{p - q}{34\mu + 32\kappa + 2}.\]
This can be accomplished if we require
\begin{equation}\label{finaldep}
\Delta = \Omega\left(\frac{\kappa(\max\{|p|, |q|\} + \kappa)\sqrt n}{(p - q)^2}\right).
\end{equation}
In addition, observe that we need the failure probability of $\exp\left(-\frac{\epsilon^2s_i}{3\kappa^2}\right)$ above to be $o(nk)$, since we take a union bound over $nk$ ``degree events'' (see Section~\ref{delandrec}).  This can be accomplished if we require
\begin{equation}\label{finaldep2}
\Delta = \omega\left(\frac{(\max\{|p|, |q|\} + \kappa)^2\kappa^2\log n}{(p - q)^2}\right).
\end{equation}
Thus, we get the following theorem:

\begin{thm}\label{genthm}
Let $\mathcal C$ be defined as in the beginning of Section~\ref{gen}, and assume that~\eqref{genepsilon}-\eqref{finaldep2} are satisfied.  Then $\mathcal C$ can be recovered a.s.\ in polynomial time given only $\hat A \sim \PP(n, \mathcal C, D_1, D_2, D_3)$.
\end{thm}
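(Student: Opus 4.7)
The plan is to follow the same skeleton as the Bernoulli analysis of Sections~\ref{BBhateigs}--\ref{delandrec}, replacing each quantitative estimate by one whose constants track $p$, $q$, $\sigma$, and $\kappa$. First I would establish the spectral bound~\eqref{nonbernoullinorm} by applying a general concentration inequality for symmetric random matrices with independent above-diagonal entries (for instance the bound of~\cite{ColeFR17} used to prove Lemma~\ref{lemma:B-Bhatnorm}, but instantiated with the variance proxy $\sigma$ and truncation parameter $\kappa$). Weyl's inequalities then give $|\lambda_i(B) - \lambda_i(\hat B)| \le (2\sigma + 6\kappa)\sqrt n$, and combining this with Lemma~\ref{Beigs} yields a separation of at least $(p-q)\Delta - (4\sigma + 12\kappa)\sqrt n$ between the top $k_1$ eigenvalues and the rest of the spectrum, analogous to Lemma~\ref{eigsep}. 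Feeding this gap into Lemma~\ref{projdiff} reproduces the bounds~\eqref{projl2norm} and~\eqref{projfrobnorm} for $\|P_{k_1}(B) - P_{k_1}(\hat B)\|_2$ and $\|P_{k_1}(B) - P_{k_1}(\hat B)\|_F$, provided the hypothesis~\eqref{nonbernoulliepsilondep} holds, which is exactly what~\eqref{finaldep} is designed to ensure.

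Next I would redo the ``approximate cluster'' step of Section~\ref{approxcluster} using the modified $\hat s := (\lambda_1(\hat B) + (2\sigma + 6\kappa)\sqrt n)/(p-q)$. Because $P_{k_1}(B) = \sum_{i=1}^{k_1}\frac{1}{s_i}\1_{C_i}\1_{C_i}^\top$ depends only on the block structure of $B$ and not on the underlying distributions, Lemmas~\ref{largenorm=>goodB} and~\ref{existsgoodB} transfer almost verbatim once~\eqref{nonbernoullinorm} is substituted for~\eqref{B-Bhatnorm}; the optimization argument and rounding argument are unchanged. This yields a vertex $v$ with $|W_v| \le (1+\epsilon)\hat s$ and $|W_v \cap C_i| \ge (1 - 6\epsilon)s_{k_1}$ for some $C_i \in \mathcal{C}_1$.

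The real work is in adapting the ``recover exactly'' step, where entries of $\hat A$ may no longer lie in $\{0,1\}$. For each cluster $C_i$ and each vertex $u$ I would apply a Hoeffding (or Bernstein) bound to the sum $S_{u,C_i} = \sum_{v \in C_i} \hat a_{uv}$, whose terms are independent and bounded in an interval of length $2\kappa$, to obtain~\eqref{SuCiinCi} and~\eqref{SuCinotinCi} each with probability $\ge 1 - \exp(-\epsilon^2 s_i / 3\kappa^2)$. Then for a set $W$ with $|W \triangle C_i|$ small, I would bound $|S_{u,W} - S_{u,C_i}|$ by $(\mu + \kappa)|W \triangle C_i|$ with $\mu = \max\{|p|,|q|\}$, since each discrepant index contributes at most $\mu + \kappa$ in absolute value. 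Carrying out the same arithmetic as in Lemma~\ref{recoverexactly} and using~\eqref{shatbounds} to convert $s_i$ and $s_{k_1}$ to $\hat s$ gives Lemma~\ref{recoverexactly:nonber}; the threshold $(p - q)/(34\mu + 32\kappa + 2)$ in~\eqref{genepsilon} is exactly what separates the $u \in C_i$ and $u \notin C_i$ cases.

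Finally, to combine iterations I would reuse the ``preprocess the randomness'' framework of Section~\ref{delandrec}: define spectral events $E_J$ and degree events $D_{i,u}$ for all cluster submatrices and all pairs $(i,u)$, show that conditioned on all of these events Algorithm~\ref{nonunifalg} succeeds, and then take a union bound. The spectral failure probability remains $2^{-\Omega(\sqrt n)}$, and the degree failure probability is at most $nk\exp(-\epsilon^2 \Delta / 3\kappa^2)$, which is $o(1)$ precisely under~\eqref{finaldep2}. The main obstacle I expect is the first step: obtaining the spectral norm bound~\eqref{nonbernoullinorm} with the right dependence on $\sigma$ and $\kappa$ and ensuring the concentration is strong enough (subgaussian in $\sqrt n$) that the union bound over the $2^{k_1} + \cdots + 2^{k_K}$ cluster submatrices still leaves failure probability $2^{-\Omega(\sqrt n)}$; everything downstream is essentially arithmetic bookkeeping on the constants.
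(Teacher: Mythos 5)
Your proposal follows the paper's argument exactly: substitute the generalized spectral bound~\eqref{nonbernoullinorm}, rederive the eigenvalue gap and the projector bounds under~\eqref{nonbernoulliepsilondep}, note that $P_{k_1}(B)$ depends only on the block structure of $B$ so Lemmas~\ref{largenorm=>goodB} and~\ref{existsgoodB} carry over verbatim, bound $|S_{u,W} - S_{u,C_i}|$ by $(\mu+\kappa)|W \triangle C_i|$ to obtain Lemma~\ref{recoverexactly:nonber} and the threshold~\eqref{genepsilon}, and close with the same union bound over cluster submatrices and degree events. This is precisely the outline the paper gives (and leaves as an exercise), including your correct identification of the Füredi--Komlós-type concentration step as the place where the dependence on $\sigma$ and $\kappa$ must be tracked carefully.
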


Observe that~\eqref{finaldep} and~\eqref{finaldep2} together imply
\[\kappa(\max\{|p|, |q|\} + \kappa\}) \ll \frac{\sqrt n}{\log n}.\]
So a necessary condition for Algorithm~\ref{nonunifalg}'s success is that $|p|, |q|$, and $\kappa$ aren't too big.  On the other hand, if $\kappa$ is small (i.e., the entries of $\hat B$ are highly concentrated), we can potentially get away with a smaller-than-$\sqrt n$ separation between the cluster sizes.  Note also that our goal was to make Algorithm~\ref{nonunifalg} work in the \emph{most general} setting possible; it may be possible to obtain better conditions on the parameters in certain special cases.  One can do this by mirroring the analysis in Sections~\ref{BBhateigs}-\ref{delandrec}.

\bibliographystyle{plain}
\bibliography{bibliography}

\appendix

\section{Proof of Lemma~\ref{projdiff}}\label{projdiffproof}

We will prove~\eqref{projdiffl2} using the Cauchy integral formula for projections~\cite[Theorem~14]{ColeFR17}.  Define $\gamma$ to be the boundary of a $2M \times 2M$ square in the complex plane with upper and lower sides are on the lines $y = \pm M$, left side on the line $x = x_0$, and right side on the line $x = x_0 + 2M$, where $x_0 := (\alpha + \beta) / 2$ and we will let $M \to \infty$.  Thus, the interior of $\gamma$ contains exactly the largest $r$ eigenvalues of both $X$ and $Y$ (see Figure~\ref{cauchypic}).  

Applying the Cauchy integral formula,
\begin{eqnarray*}\label{Caucintfor}
P_r(X)=\frac{1}{2\pi{i}}\int_{\gamma} (zI_n-X)^{-1}dz,\\ 
P_r(Y)=\frac{1}{2\pi{i}}\int_{\gamma} (zI_n-Y)^{-1}dz.\notag
\end{eqnarray*}
Hence
\begin{eqnarray*}
P_r(X) - P_r(Y)&=&\frac{1}{2\pi{i}}\int_{\gamma} (zI_n-X)^{-1}\big((zI_n-Y)-(zI_n-X)\big)(zI_n-Y)^{-1} dz\\
&=&\frac{1}{2\pi{i}}\int_{\gamma} (zI_n-X)^{-1}\big(X-Y\big)(zI_n-Y)^{-1} dz,
\end{eqnarray*}
and so we get
\begin{eqnarray}
\|P_r(X) - P_r(Y)\|_2&\le& \frac{1}{2\pi}\int_{\gamma} \|(zI_n-X)^{-1}\big(X-Y\big)(zI_n-Y)^{-1}\| _2|dz| \label{Cauchyforest} \\
&\le& \frac{1}{2\pi}\int_{\gamma} \|(zI_n-X)^{-1}\|_2\|X-Y\|_2 \|(zI_n-Y)^{-1}\|_2 |dz|.\notag
\end{eqnarray}
Observe that for each $z\in\C$ the matrices $zI_n-X, zI_n-Y $ are normal.
Hence 
\[\|(zI_n-X)^{-1}\|_2=\frac{1}{\min_{j\in [n]} |z-\lambda_j(X)|}, \quad  \|(zI_n-Y)^{-1}\|_2=\frac{1}{\min_{j\in [n]} |z-\lambda_j(Y)|}.\]
Let us first estimate the contribution to the integral \eqref{Cauchyforest} on the left side of $\gamma$.  Let $z=x_0 +y{i}, y\in\R$.  That is, $z$ lies on the line $x=x_0$.
By our assumption about the eigenvalues of $X$ and $Y$, all eigenvalues of $X$ and $Y$ are a horizontal distance of at least $d := (\beta - \alpha) / 2$ from the line $x = x_0$; hence
\[|z-\lambda_j(X)|, |z - \lambda_j(Y)| \ge \sqrt {d^2 +y^2}\]
for $z = x_0 + yi$.  Therefore, the contribution to~\eqref{Cauchyforest} from the left side of $\gamma$ is upper bounded by
\[\frac1{2\pi}\int_{-\infty}^\infty\frac{||X - Y||_2}{d^2 + y^2}dy = \frac{||X - Y||_2}{2d} = \frac{||X - Y||_2}{\beta - \alpha}.\]
The contributions from the other sides of $\gamma$ go to 0 as $M \to \infty$.  This completes the proof of~\eqref{projdiffl2}.

To show~\eqref{projdiffFrob}, observe that $P_r(X)$ and $P_r(Y)$ both have rank $r$, so $P_r(X)-P_r(Y)$ has rank at most $2r$.  Hence, $P_r(X)-P_r(Y)$ has at most $2r$ nonzero eigenvalues.  Recall that for any real symmetric $n \times n$ matrix $H$ 
\[||H||_F^2 = \sum_{i = 1}^n\lambda_i(H)^2 \leq \rk(H) \cdot ||H||_2^2.\]
The lemma thus follows from~\eqref{projdiffl2}:
\[||P_r(X) - P_r(Y)||_F^2 \leq 2r||P_r(X) - P_r(Y)||_2^2 \leq \frac{2r||X - Y||_2^2}{(\beta - \alpha)^2}.\qed\]

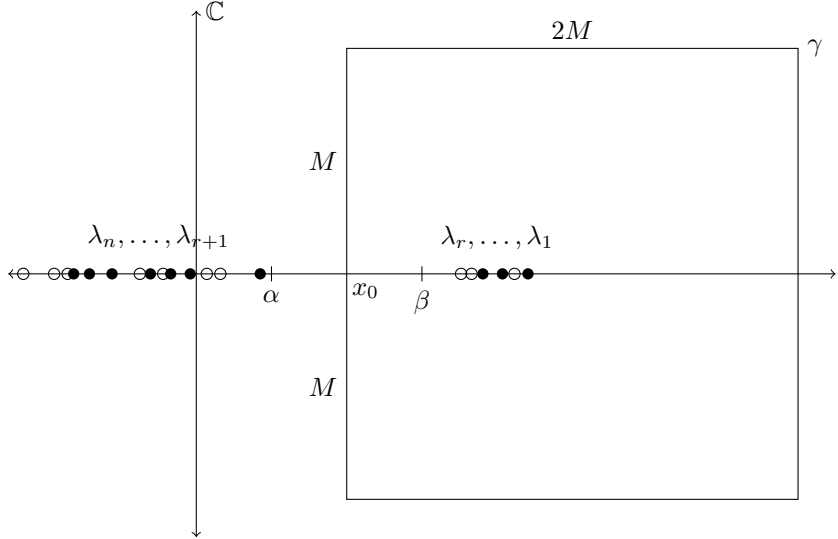
\begin{figure}
\centering
\begin{tikzpicture}

\draw[<->] (0, -3.5) -- (0, 3.5) node[anchor = west]{$\C$};
\draw[<->] (-2.5, 0) -- (8.5, 0);

\draw (2, -3) rectangle(8, 3) node[anchor = west]{$\gamma$};
\draw (5, 3) node[anchor = south]{$2M$};
\foreach \y in {-1.5, 1.5}
\draw (2, \y) node[anchor = east]{$M$};
\draw (2.25, 0) node[anchor = north]{$x_0$};

\foreach \x/\lbl in {1/$\alpha$, 3/$\beta$}
\draw (\x, .1) -- (\x, -.1) node[anchor = north]{\lbl};

\foreach \x in {-1.12, -.34, -.08, -1.63, -.61, .85, -1.42,
4.41, 4.07, 3.81}
\fill (\x, 0) circle(.075);
\foreach \x in {-.75, -2.3, -1.71, -.44, .32, -1.89, .14,
3.52, 3.66, 4.23}
\draw (\x, 0) circle(.075);
\draw (-.5, .2) node[anchor = south]{$\lambda_n, \ldots, \lambda_{r + 1}$};
\draw (4, .2) node[anchor = south]{$\lambda_r, \ldots, \lambda_1$};

\end{tikzpicture}
\caption{The largest $r$ eigenvalues of both $X$ ($\circ$) and $Y$ ($\bullet$) are in the interior of $\gamma$, while the remaining $n - r$ eigenvalues are in the exterior.}\label{cauchypic}
\end{figure}

\end{document}